\documentclass[10pt]{article}
\usepackage[square,sort,comma,numbers]{natbib}
\usepackage[T1]{fontenc}
\usepackage[letterpaper,margin=1in]{geometry}
\usepackage{amsmath,amssymb,amsthm,mathtools}
\usepackage{booktabs}
\usepackage{array}
\usepackage{enumitem}
\usepackage{microtype}
\usepackage{xcolor}
\usepackage{framed}
\definecolor{shadecolor}{gray}{0.93}

\usepackage{tikz}
\usetikzlibrary{positioning}
\usepackage{float}
\usepackage{hyperref}
\usepackage[ruled,linesnumbered,vlined]{algorithm2e}

\hypersetup{
  colorlinks=true,
  linkcolor=red,
  citecolor=green!55!black,
  urlcolor=magenta,
  pdftitle={Non-Existence of Exact Pairwise Maximin Share Allocations for Additive Goods and Chores},
  pdfauthor={Xiaohui Bei, Zehan Lin, Shengxin Liu, Rong Luan, Biaoshuai Tao}
}

\newtheorem{theorem}{Theorem}
\newtheorem{lemma}[theorem]{Lemma}
\newtheorem{proposition}[theorem]{Proposition}

\newtheorem{remark}[theorem]{Remark}

\newcommand{\PMMS}{\mathrm{PMMS}}
\newcommand{\PiTwo}{\Pi_2}
\newcommand{\GoodsMu}{\mu^{\mathrm{G}}}
\newcommand{\ChoreMu}{\mu^{\mathrm{C}}}
\newcommand{\alphaPMMS}
{\alpha^{\star}_{\PMMS}}
\newcommand{\rhoPMMS}{\rho^{\star}_{\PMMS}}

\title{Non-Existence of PMMS Allocations and\\
a $4/3$-PMMS Guarantee for Additive Chores}
\author{%
  \begin{tabular}{@{}c@{}}
    Xiaohui Bei\textsuperscript{1}, Zehan Lin\textsuperscript{2}, Shengxin Liu\textsuperscript{3},
    Rong Luan\textsuperscript{1}, and Biaoshuai Tao\textsuperscript{4} \\[1.0em]
    {\normalsize\textsuperscript{1}Nanyang Technological University, \texttt{\{xhbei@ntu.edu.sg, luan0020@e.ntu.edu.sg\}}} \\
    {\normalsize\textsuperscript{2}University of Macau, \texttt{yc47490@um.edu.mo}} \\
    {\normalsize\textsuperscript{3}Harbin Institute of Technology, Shenzhen, \texttt{sxliu@hit.edu.cn}} \\
    {\normalsize\textsuperscript{4}Shanghai Jiao Tong University, \texttt{bstao@sjtu.edu.cn}}
  \end{tabular}%
}
\date{}

\begin{document}
\maketitle

\begin{abstract}
We study pairwise maximin share (PMMS) fairness for indivisible items with additive preferences. We give a polynomial-time reduction from chores to goods that preserves the existence of a PMMS allocation. Together with known nonexistence results for chores, this yields nonexistence for additive goods. In addition, we show that deciding if a given instance admits a PMMS allocation is NP-hard. We also give explicit instances whose PMMS factors are $226/227$ for goods and $1.102065$ for chores, certified by exact enumeration. Complementing these impossibility results, we prove that every additive-chore instance admits a $4/3$-PMMS allocation.
\end{abstract}

\section{Introduction}
% Citation keys and bibliographic metadata follow DBLP records in reference.bib.

Fair division studies how to allocate a set $M$ of $m$ items among a set $N$ of $n$ agents with heterogeneous preferences. Depending on whether the items provide value or impose costs on the agents, they are modeled as \emph{goods} or \emph{chores}, respectively.
Examples include distributing resources and assigning unwanted tasks. When items are indivisible, even basic fairness requirements can be impossible to satisfy. 
Understanding which fairness guarantees remain achievable is therefore a central question in algorithmic game theory. 
In this paper, we focus on additive preferences, under which the value or cost of a bundle is the sum of the values or costs of its items.

A fundamental fairness notion is \emph{envy-freeness} (EF), which requires that no agent prefer another agent's bundle to her own. However, an envy-free allocation need not exist even when a single item is allocated between two agents with identical preferences. This obstacle has motivated a range of relaxations, including the \emph{maximin share} (MMS) guarantee, which extends the logic of cut-and-choose to multiple agents~\cite{journals/ai/AmanatidisABFLMVW23}. An agent partitions all items into as many bundles as there are agents and evaluates the least preferred bundle. For goods, her MMS is the largest minimum bundle value she can secure over all such partitions. For chores, the analogous benchmark is the smallest maximum bundle cost. An MMS allocation gives every agent a bundle at least as desirable as her benchmark.
Introduced by Caragiannis et al.~\cite{journals/teco/CaragiannisKMPS19}, \emph{pairwise maximin share} (PMMS) applies this partition-based guarantee to pairwise comparisons. Let $X_i$ denote the bundle allocated to agent $i$. For every pair of distinct agents $i$ and $j$, agent $i$ considers all bipartitions of $X_i\cup X_j$. For goods, PMMS requires that the value of $X_i$ be at least the maximum value she can guarantee for the less valuable part. For chores, it requires that the cost of $X_i$ be at most the minimum cost she can guarantee for the more costly part. Both comparisons use agent $i$'s own preferences. Thus, PMMS combines the pairwise perspective of envy-freeness with the partition-based benchmark of MMS.

Despite this natural interpretation, the existence of PMMS allocations has proved difficult to settle. For additive goods, it remained an open question in the literature~\cite{journals/ai/AmanatidisABFLMVW23,conf/aaai/ByrkaMP26}. Byrka et al.~\cite{conf/aaai/ByrkaMP26} established existence for several restricted valuation classes and constructed a three-agent counterexample with two monotone valuations and one additive valuation.
Their counterexample does not resolve the question when all agents have additive valuations.
This leaves the following question.

\begin{center}\begin{minipage}{0.97\linewidth}\begin{shaded}\noindent\textbf{Question 1. }
%\textit{Can fair allocations of indivisible chores be computed in the comparison model using sublinear (or even logarithmic) query complexity with respect to $m$?}
%\textit{Can fair allocations of indivisible chores be computed using only logarithmically many comparison queries in the number of chores?}
\textit{Does any instance of indivisible goods with additive valuations admit a PMMS allocation?}
\end{shaded}
\end{minipage}
\end{center}

For additive chores, Sun et al.~\cite{journals/aamas/SunCD23a} showed that PMMS implies \emph{envy-freeness up to any item} (EFX), which requires that removing any positive-cost chore from an agent's own bundle eliminate her envy toward any other agent. He and Tao~\cite{journals/corr/abs-2606-08872} subsequently constructed additive chore instances with no EFX allocation.
Together, these results imply that exact PMMS allocations need not exist for additive chores. 
This implication alone, however, does not determine how closely PMMS can always be approximated. 
For chores, a factor $\rho\geq 1$ permits each agent's cost to be at most $\rho$ times her pairwise benchmark. The failure of exact PMMS therefore raises a quantitative question.

\begin{center}\begin{minipage}{0.97\linewidth}\begin{shaded}\noindent\textbf{Question 2. }
%\textit{Can fair allocations of indivisible chores be computed in the comparison model using sublinear (or even logarithmic) query complexity with respect to $m$?}
%\textit{Can fair allocations of indivisible chores be computed using only logarithmically many comparison queries in the number of chores?}
\textit{How closely can PMMS be approximated for additive chores? What lower and upper bounds can be established on the smallest approximation factor guaranteed for every instance?}
\end{shaded}
\end{minipage}
\end{center}

\subsection{Our Results}

We answer Question~1 negatively and give lower and upper bounds for Question~2. Our main results establish nonexistence for additive goods through a reduction from chores, and lower and upper approximation bounds for additive chores.

\paragraph{Nonexistence for Goods.}
Our first result follows from a polynomial-time reduction that transforms any additive-chore instance into an additive-goods instance while preserving PMMS existence (Lemma~\ref{lem:reduction_chores_goods}). Applying this reduction to a chore instance with no PMMS allocation settles the existence question for additive goods.

% \begin{resultbox}
\medskip
\noindent
\textbf{Result 1 (Theorem~\ref{thm:goods}).}
\textit{PMMS allocations need not exist for additive goods.}
% \end{resultbox}
\medskip

The reduction replaces chore costs by complementary values and adds goods to force equal bundle sizes in every PMMS allocation. We also give a separate three-agent, nine-good instance with optimal PMMS ratio $226/227$, certified by exact enumeration in Appendix~\ref{append:impossibility_ratio} (Theorem~\ref{thm:goods-ratio}).

\paragraph{Computational Hardness.}
We next consider the decision problem of whether a given additive instance admits an exact PMMS allocation. We first show that EFX and PMMS coincide under lexicographic preferences over mixed goods and chores. Combined with the NP-completeness result of \citet{hosseini2023fairly}, this implies that deciding PMMS existence is NP-complete in this domain. We then give a polynomial-time reduction from additive mixed instances to additive chore instances that preserves PMMS existence, yielding NP-hardness for additive chores. Lemma~\ref{lem:reduction_chores_goods} then subsequently transfer the hardness result to additive goods.

\medskip
\noindent
\textbf{Result 2 (Theorems~\ref{thm:NPhard-chores} and~\ref{thm:NPhard-goods}).}
\textit{It is NP-hard to decide whether an additive chores instance admits a PMMS allocation. The same decision problem is NP-hard for additive goods.}
\medskip

\paragraph{Lower and Upper Bounds for Chores.}
For chores, we complement the known failure of exact PMMS with an explicit approximation lower bound and a universal upper bound.

\medskip
\noindent
{\bf Result 3} (Theorems~\ref{thm:chores} and~\ref{thm:chores-existence}){\bf .}
{\em There exists an additive-chore instance with no $1.102$-PMMS allocation. Furthermore, every additive-chore instance admits a $4/3$-PMMS allocation.}
\medskip

The lower bound comes from a five-agent, twelve-chore instance with optimal PMMS factor $1.102065$, verified by exact enumeration in Appendix~\ref{append:impossibility_ratio}. For the upper bound, we transform a price-supported pEF1 allocation so that every agent either receives a singleton or incurs cost at most twice the cost she assigns to any other bundle. This property implies $4/3$-PMMS. Section~\ref{sec:chores-existence} gives the construction and extends it to nonnegative costs by perturbation.

\subsection{Related Work}
% We review related works that are most relevant to ours. Much of the literature on share-based fairness for indivisible items centered around the maximin-share (MMS) guarantee. The MMS notion was introduced in the context of indivisible allocation by \citet{budish2011}, and subsequently recieved substantial attention in the fair division literature.
% In particular, \citet{journals/jacm/KurokawaPW18} initiated the systematic study of MMS for additive valuations, showing that an exact MMS allocation need not always exist and establishing a $2/3$-approximation guarantee. This triggered a long line of wokr devoted to understanding the existence, computation, and approximation of MMS allocations.
Due to the vast literature on the fair allocation problem, in the following, we only review the results highly related to our work.
For a comprehensive overview of other related works, please refer to the recent surveys by \cite{journals/ai/AmanatidisABFLMVW23,journals/jair/LiuLSW24}.

\paragraph{(Approximate) PMMS Allocations for Goods.}
The notion of PMMS was first introduced later by \citet{journals/teco/CaragiannisKMPS19} in their study of the fairness and efficiency properties of maximum Nash welfare (MNW) solutions. They established that every MNW solution guarantees a $(\sqrt{5} - 1)/{2}\approx 0.618$-PMMS allocation and this golden-ratio factor is tight for the MNW rule. Subsequently, this general approximation guarantee was improved to $(\sqrt{17}-1)/4 \approx 0.7808$, providing a polynomial-time algorithm that computes an allocation achieving this factor. Until the present and concurrent works, exact PMMS existence for unrestricted additive valuations remained open. Prior work therefore focused on identifying valuation classes and allocation structures that admit exact or improved guarantees. Exact PMMS is known for binary additive and identical valuations, while agents with a common ordinal ranking admit a polynomial-time solvable $4/5$-PMMS allocation \cite{conf/aaai/BarmanBMN18,journals/tcs/DaiGMGXZ24}. For sufficiently similar valuations, \citet{conf/atal/BarmanKP24} obtained a $5/6 \gamma$-PMMS guarantee, where $\gamma$ measures the similarity between agents' item values. These results are accompanied by several implications for other fairness notions. Moreover, under additive valuations, exact PMMS implies envy-free up to any good (EFX), while EFX and envy-free up to one good (EF1) can be used to obtain approximate PMMS guarantees \cite{journals/teco/CaragiannisKMPS19,conf/ijcai/AmanatidisBM18}.
Related positive results extend to several structured valuation and allocation settings. These include matroid-rank, personalized bivalued, binary-valued, pair-demand, connected-allocation, and graphical valuation models, where exact or approximate PMMS guarantees have been established under suitable assumptions \cite{conf/atal/BarmanV21,conf/aaai/ByrkaMP26,conf/atal/Hummel024,conf/aaai/ChristodoulouM26}. In contrast, a three-agent counterexample shows that exact PMMS need not exist for general monotone valuations \cite{conf/aaai/ByrkaMP26}. These results did not settle the existence of exact PMMS for unrestricted additive valuations.

\paragraph{(Approximate) PMMS Allocations for Chores.} Compared with the goods setting, PMMS has received considerably less direct investigation for indivisible chores. The main systematic study is due to \citet{journals/aamas/SunCD23a}, who study PMMS together with MMS, EFX, and EF1 under additive and submodular cost functions. They established approximation relationships among these fairness notions and the efficiency loss induced by imposing them. In particular, they observe that every allocation is $2$-PMMS under subadditive cost functions, revealing that constant-factor approximate PMMS is substantially weaker for chores than for goods. They further show that exact PMMS implies EFX under additive costs, whereas for every $1<\alpha\leq 2$, an $\alpha$-PMMS allocation need not provide any bounded approximation to EF1 or EFX. They also establish price-of-fairness bounds for PMMS, showing that the efficiency loss can become unbounded when there are at least three agents. More recently, \citet{conf/aaai/ByrkaMP26} establish exact PMMS existence for binary-valued MMS-feasible set functions without requiring monotonicity; consequently, their result also applies to certain chore and mixed-manna instances. Beyond these results, however, the existence and algorithmic computation of PMMS allocations for general chore valuations remain comparatively underexplored.

\paragraph{Concurrent and Independent Work.}
Concurrently and independently of our work, \citet{aziz2026pmms} and \citet{golz2026pmms} also showed that PMMS allocations need not exist for additive goods. Aziz gives a counterexample with four agents and strictly positive valuations, while G{\"o}lz gives one with three agents and nine goods. G{\"o}lz further provides a computer-verified instance admitting no $\alpha$-PMMS allocation for any $\alpha>78/79$.

\section{Preliminaries}

For any positive integer $k$, let $[k]=\{1,\ldots,k\}$. Let $N=[n]$ be a finite set of agents and $M$ a finite set of indivisible items. An allocation $X=(X_i)_{i\in N}$ is a partition of $M$. We write $\PiTwo(S)$ for the set of ordered bipartitions $(P_1,P_2)$ of a bundle $S\subseteq M$.

\paragraph{Goods.}
Each agent $i$ has a nonnegative additive valuation
$v_i:2^M\to\mathbb{R}_{\ge 0}$.
For $S\subseteq M$, define agent $i$'s two-way maximin-share benchmark for goods by
\[
  \GoodsMu_i(2,S)
  :=
  \max_{(P_1,P_2)\in\PiTwo(S)}
  \min\{v_i(P_1),v_i(P_2)\}.
\]
Thus, agent $i$ partitions $S$ into two bundles and chooses a partition
that maximizes the value of the less valuable bundle.
An allocation $X=(X_1,\ldots,X_n)$ is $\alpha$-PMMS for goods if
\[
  v_i(X_i)
  \ge
  \alpha\,\GoodsMu_i(2,X_i\cup X_j)
  \qquad\text{for all }i\ne j.
\]
For a fixed allocation $X$, define its PMMS factor by
\[
  \alpha(X)
  :=
  \min_{\substack{i\ne j\\
  \GoodsMu_i(2,X_i\cup X_j)>0}}
  \frac{v_i(X_i)}
       {\GoodsMu_i(2,X_i\cup X_j)}.
\]
Hence, $X$ is $\alpha$-PMMS if and only if $\alpha(X)\ge \alpha$,
and $X$ is exact PMMS if and only if $\alpha(X)\ge 1$.

For a goods instance $I$, define its optimal PMMS factor by
\[
  \alpha^\star(I)
  :=
  \max_{X}\alpha(X),
\]
where the maximum ranges over all feasible allocations.
The universal PMMS guarantee for additive goods is
\[
  \alphaPMMS
  :=
  \inf_I \alpha^\star(I).
\]
Thus, $\alphaPMMS$ is the largest factor that can be guaranteed
for every additive-goods instance. In particular, exact PMMS
allocations exist universally if and only if $\alphaPMMS\ge 1$.

\paragraph{Chores.}
Each agent $i$ has a nonnegative additive cost function
$c_i:2^M\to\mathbb{R}_{\ge 0}$.
For $S\subseteq M$, define agent $i$'s two-way maximin-share benchmark
for chores by
\[
  \ChoreMu_i(2,S)
  :=
  \min_{(P_1,P_2)\in\PiTwo(S)}
  \max\{c_i(P_1),c_i(P_2)\}.
\]
Thus, agent $i$ partitions $S$ into two bundles and chooses a partition
that minimizes the cost of the more costly bundle.
An allocation $X=(X_1,\ldots,X_n)$ is $\rho$-PMMS for chores if
\[
  c_i(X_i)
  \le
  \rho\,\ChoreMu_i(2,X_i\cup X_j)
  \qquad\text{for all }i\ne j.
\]
For a fixed allocation $X$, define its PMMS factor by
\[
  \rho(X)
  :=
  \max_{\substack{i\ne j\\
  \ChoreMu_i(2,X_i\cup X_j)>0}}
  \frac{c_i(X_i)}
       {\ChoreMu_i(2,X_i\cup X_j)}.
\]
Hence $X$ is $\rho$-PMMS if and only if $\rho(X)\le \rho$,
and $X$ is exact PMMS if and only if $\rho(X)\le 1$.

For a chore instance $I$, define its optimal PMMS factor by
\[
  \rho^\star(I)
  :=
  \min_X \rho(X),
\]
where the minimum ranges over all feasible allocations.
The universal PMMS guarantee for additive chores is
\[
  \rhoPMMS
  :=
  \sup_I \rho^\star(I).
\]
Thus, $\rhoPMMS$ is the smallest factor that can be guaranteed
for every additive-chore instance. In particular, exact PMMS
allocations exist universally if and only if $\rhoPMMS\le 1$.

\paragraph{EFX for Chores.}
We use the positive-cost convention of Sun et al.~\cite{journals/aamas/SunCD23a}: an allocation $X$ is EFX if
\[
  c_i(X_i\setminus\{g\})\leq c_i(X_j)
  \qquad\text{for all }i\ne j\text{ and }g\in X_i\text{ with }c_i(g)>0.
\]
Thus, zero-cost chores are excluded from the removal condition. Under this convention, PMMS implies EFX for nonnegative additive costs, without any non-degeneracy assumption.

\section{Main Impossibility Results}
\label{sec:mainImpossibility}
In this section, we present the impossibility/hardness results for both settings with goods and chores.\footnote{
The exact verification scripts and outputs are available at
\url{https://anonymous.4open.science/r/pmms-verification-A58D}.
}
Under each setting, we show that there exists an instance where a PMMS allocation does not exist.
In addition, we show that the problem of deciding if a given instance admits a PMMS allocation is NP-hard for both goods and chores.

\subsection{Non-Existence Examples}
An additive chores instance where no EFX allocation exists is presented in \citet{journals/corr/abs-2606-08872}.
Since PMMS implies EFX, this immediately gives an example where no PMMS allocation exists for additive chores.
The non-existence example of PMMS for additive goods follows from the following lemma, which constructs an additive goods instance $I'$ from an additive chores instance $I$ such that $I$ admits a PMMS allocation if and only if $I'$ admits one.

\begin{lemma}[Karp Reduction from Chores to Goods]\label{lem:reduction_chores_goods}
    Given an additive chores instance $I=(N,M,c_1,\ldots,c_n)$ with $n\geq2$ agents and $m$ items, consider an additive goods instance $I'=(N,M',v_1,\ldots,v_n)$ with $n$ agents and $nm$ items constructed below:
    \begin{itemize}
        \item let $K=1+\max_ic_i(M)$;
        \item for each item $g$ in $I$, construct an item $g'$ in $I'$ such that $v_i(g')=K-c_i(g)$ for each $i\in N$;
        \item add $(n-1)m$ additional goods such that each agent has value $K$ for each good.
    \end{itemize}
    Then $I$ admits a PMMS allocation if and only if $I'$ admits a PMMS allocation.
\end{lemma}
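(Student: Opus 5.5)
The plan is to show that in $I'$ the dummy goods (value $K$ to everyone) together with the choice of $K$ force every PMMS allocation to give each agent exactly $m$ items. On allocations with equal bundle sizes, the goods benchmark in $I'$ will then reduce exactly to the chores benchmark in $I$. The key inequality is that any bundle $S\subseteq M'$ satisfies
\[
  |S|K - c_i(M) \;\le\; v_i(S) = |S|K - c_i(S\cap M) \;\le\; |S|K,
\]
where we identify each $g'$ with $g$ and give dummies cost $0$. Since $c_i(M)<K$, we get $v_i(S)\in\bigl((|S|-1)K,\,|S|K\bigr]$. So agent $i$'s valuation of a bundle is determined by its cardinality up to an error strictly smaller than $K$.

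\textbf{Step 1 (equal sizes are forced).} Let $X'$ be PMMS in $I'$, and suppose $|X'_i|=s$ and $|X'_j|\ge s+2$ for some $i\neq j$. Then $X'_i\cup X'_j$ can be split into two parts each of size at least $s+1$. Each part is worth more than $sK\ge v_i(X'_i)$ to agent $i$, so $\GoodsMu_i(2,X'_i\cup X'_j)>v_i(X'_i)$, contradicting PMMS. Hence all bundle sizes lie in $\{s,s+1\}$ for some $s$. Their average is $nm/n=m$, an integer, which forces every bundle to have size exactly $m$.

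\textbf{Step 2 (benchmark translation for equal sizes).} Fix any allocation $X'$ with $|X'_i|=m$ for all $i$. Let $U=X'_i\cup X'_j$ (size $2m$), and let $U_0=U\cap M$ be its original items, with $|U_0|\le m$.
\begin{itemize}
  \item A bipartition of $U$ into unequal sizes has a part of size at most $m-1$, worth at most $(m-1)K$.
  \item Every part of size $m$ is worth more than $(m-1)K$.
\end{itemize}
So $\GoodsMu_i(2,U)$ is attained on an $(m,m)$ split, where the smaller value equals $mK-\max\{c_i(A),c_i(B)\}$ and $(A,B)$ is the induced split of $U_0$. Conversely, every bipartition $(A,B)$ of $U_0$ extends to an $(m,m)$ split of $U$ by padding with the $2m-|U_0|$ dummies in $U$, since $|A|,|B|\le m$. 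Hence
\[
  \GoodsMu_i(2,U)=mK-\ChoreMu_i(2,U_0).
\]
We also have $v_i(X'_i)=mK-c_i(X'_i\cap M)$. Therefore the PMMS condition for the pair $(i,j)$ in $I'$ is equivalent to $c_i(X'_i\cap M)\le \ChoreMu_i\bigl(2,(X'_i\cap M)\cup(X'_j\cap M)\bigr)$.

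\textbf{Step 3 (both directions).} ($\Leftarrow$) Given a PMMS allocation $X'$ of $I'$, Step 1 gives equal sizes, and Step 2 shows that $X_i:=X'_i\cap M$ is PMMS in $I$. ($\Rightarrow$) Given a PMMS allocation $X$ of $I$, give agent $i$ the items $X_i$ plus $m-|X_i|$ dummies. This is feasible because $\sum_i(m-|X_i|)=(n-1)m$ is exactly the number of dummies. Step 2 then shows the resulting allocation is PMMS in $I'$.

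I expect the main care point to be Step 1. A size gap of exactly one cannot be excluded by a direct pairwise argument, so the integrality/averaging trick is what rules it out. A secondary care point is verifying in Step 2 that every chores split of $U_0$ is realizable as a balanced goods split, which uses $|U_0|\le m$.
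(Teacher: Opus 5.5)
Your proposal is correct and follows essentially the same route as the paper. The paper uses the same cardinality-forcing split (it picks one bundle of size at most $m-1$ and one of size at least $m+1$ directly, which is equivalent to your spread-at-most-one plus averaging step). It then uses the same padding identity $\GoodsMu_i(2,U)=mK-\ChoreMu_i(2,U_0)$ for equal-size allocations, and the same dummy-count feasibility check for the converse direction.
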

\begin{proof}
The claim is immediate if $m=0$, so assume $m\geq1$.
Let $D$ denote the set of $(n-1)m$ additional goods.
For every $X\subseteq M'$, define $\pi(X):=\{g\in M:g'\in X\}$, i.e., $\pi(X)$ contains the original chores represented by the goods in $X$. 
By construction, $v_i(X)=K|X|-c_i(\pi(X))$.
Since costs are nonnegative and $K>c_i(M)$ for every $i\in N$, we have
\begin{equation}
    K|X|-c_i(M)\leq v_i(X)\leq K|X|.
    \label{eq:padding-bundle-value}
\end{equation}

We first show that every PMMS allocation of $I'$ gives exactly $m$ goods to each agent. 
Suppose otherwise, and let $A'=(A'_1,\ldots,A'_n)$ be such an allocation. 
Since $|M'|=nm$, there exist distinct agents $i,j$ such that
$$
    |A'_i|=r\leq m-1
    \qquad\text{and}\qquad
    |A'_j|\geq m+1\geq r+2.
$$
Consequently, $A'_i\cup A'_j$ can be partitioned into two bundles $P',Q'$ with $|P'|,|Q'|\geq r+1$. 
By
\eqref{eq:padding-bundle-value},
$$
    \min\{v_i(P'),v_i(Q')\}
    \geq (r+1)K-c_i(M)
    >rK
    \geq v_i(A'_i).
$$
This contradicts the PMMS condition for agent $i$ with respect to agent $j$. 
Therefore,
\begin{equation}
    |A'_i|=m
    \qquad\text{for every }i\in N
    \label{eq:padding-equal-cardinality}
\end{equation}
in every PMMS allocation of $I'$.

Next, consider any allocation $A'$ of $I'$ satisfying
\eqref{eq:padding-equal-cardinality}, and define $A_i:=\pi(A'_i)$ for each $i\in N$.
Then $A=(A_1,\ldots,A_n)$ is an allocation of $I$, and
\begin{equation}
    v_i(A'_i)=mK-c_i(A_i).
    \label{eq:padding-own-value}
\end{equation}
We establish an exact correspondence between the pairwise maximin shares in the two instances.

Fix distinct agents $i,j$. The set $A'_i\cup A'_j$ contains exactly $2m$ goods. Any bipartition of this set that is not $m$-versus-$m$ has a part containing at most $m-1$ goods, whose value to agent $i$ is at most $(m-1)K$. 
In contrast, every
$m$-versus-$m$ bipartition has both parts worth at least$mK-c_i(M)>(m-1)K$.
Hence, every bipartition attaining agent $i$'s two-agent maximin share of $A'_i\cup A'_j$ has exactly $m$ goods in each part.

Moreover, every bipartition $P\mathbin{\dot\cup}Q=A_i\cup A_j$
can be extended to an $m$-versus-$m$ bipartition of
$A'_i\cup A'_j$. 
Indeed, $|A_i\cup A_j|\leq m$, so both $|P|$ and $|Q|$ are at most $m$. 
The union $A'_i\cup A'_j$ contains exactly
$$
    2m-|A_i\cup A_j|
    =(m-|P|)+(m-|Q|)
$$
goods from $D$. Assigning $m-|P|$ of these goods to the copies of $P$ and the remaining $m-|Q|$ goods to the copies of $Q$ gives the required bipartition. Conversely, every $m$-versus-$m$ bipartition of $A'_i\cup A'_j$ induces a bipartition of $A_i\cup A_j$ by applying $\pi$ to its two parts.

Write $\mu_i^I(2,\cdot)$ and $\mu_i^{I'}(2,\cdot)$ for the two-agent maximin shares in $I$ and $I'$, respectively, using the min--max definition for chores and the max--min definition for goods. 
The preceding observations imply
\begin{align}
    \mu_i^{I'}(2,A'_i\cup A'_j)
    &=
    \max_{P\mathbin{\dot\cup}Q=A_i\cup A_j}
    \min\{mK-c_i(P),\,mK-c_i(Q)\}
    \notag\\
    &=
    mK-
    \min_{P\mathbin{\dot\cup}Q=A_i\cup A_j}
    \max\{c_i(P),\,c_i(Q)\}
    \notag\\
    &=
    mK-\mu_i^I(2,A_i\cup A_j).
    \label{eq:padding-share-identity}
\end{align}
Combining \eqref{eq:padding-own-value} and
\eqref{eq:padding-share-identity}, we obtain
\begin{align*}
    v_i(A'_i)\geq\mu_i^{I'}(2,A'_i\cup A'_j)
    &\iff
    mK-c_i(A_i)\geq mK-\mu_i^I(2,A_i\cup A_j)\\
    &\iff
    c_i(A_i)\leq\mu_i^I(2,A_i\cup A_j).
\end{align*}
Thus, for every equal-cardinality allocation $A'$ of $I'$, the allocation $A'$ is PMMS if and only if its decoded allocation $A$ is PMMS in $I$.

To conclude, suppose $I$ admits a PMMS allocation
$A=(A_1,\ldots,A_n)$. Give agent $i$ the goods
$\{g':g\in A_i\}$ together with $m-|A_i|$ goods from $D$.
This is feasible because
$$
    \sum_{i\in N}(m-|A_i|)
    =nm-m
    =|D|.
$$
The resulting allocation of $I'$ is equal-cardinality and hence is PMMS by the equivalence above.

Conversely, if $I'$ admits a PMMS allocation $A'$, then \eqref{eq:padding-equal-cardinality} holds. Decoding $A'$ therefore yields a PMMS allocation of $I$.
\end{proof}

\begin{remark}
Consider the two decision problems of deciding if a PMMS allocation exists for the goods and chores settings respectively.
Lemma~\ref{lem:reduction_chores_goods} gives a Karp reduction from the chores problem to the goods problem.
The lemma has established the completeness and the soundness of the reduction.
To see the reduction is polynomial-time, under the standard binary encoding of nonnegative rational singleton costs, $K$ and all constructed singleton
values have polynomial encoding length. 
The construction produces $nm$ goods and can be performed in polynomial time.
\end{remark}

We therefore have the following theorem. The nonexistence result for goods was also obtained concurrently and independently by \citet{aziz2026pmms} and \citet{golz2026pmms}.
\begin{theorem}\label{thm:goods}
    There is an additive chores instance where no PMMS allocation exists. The same holds for additive goods.
\end{theorem}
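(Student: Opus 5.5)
The plan is to derive both parts of the theorem from facts already established: the chores nonexistence from the literature, and the goods nonexistence via Lemma~\ref{lem:reduction_chores_goods}.

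\textbf{Chores.} Take the additive-chores instance $I$ of He and Tao~\cite{journals/corr/abs-2606-08872} that admits no EFX allocation. Suppose, for contradiction, that $I$ had a PMMS allocation $X$. I would check that PMMS implies EFX under the positive-cost convention, as stated by Sun et al.~\cite{journals/aamas/SunCD23a}. Fix $i\ne j$ and $g\in X_i$ with $c_i(g)>0$. Consider the bipartition $(X_i\setminus\{g\},\,X_j\cup\{g\})$ of $X_i\cup X_j$. PMMS gives
\[
c_i(X_i)\le \ChoreMu_i(2,X_i\cup X_j)\le \max\{c_i(X_i)-c_i(g),\,c_i(X_j)+c_i(g)\}.
\]
The first entry of the maximum is strictly below $c_i(X_i)$, so the maximum must be attained by the second, giving $c_i(X_i)\le c_i(X_j)+c_i(g)$. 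This is exactly the EFX condition. Hence $X$ would be EFX, a contradiction. So $I$ has no PMMS allocation.

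One subtlety is that He and Tao's EFX convention might differ, e.g.\ by quantifying over all items rather than only positive-cost ones. The version above, restricted to positive-cost chores, is the weaker requirement, so a nonexistence result for it is the stronger statement. I would verify that their instance in fact rules out this weaker version; this is the step I expect to be the only real point of care. If their instance has all costs strictly positive, the two conventions coincide and the issue disappears.

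\textbf{Goods.} Apply Lemma~\ref{lem:reduction_chores_goods} to this $I$, which has $n\ge 2$ agents. This produces an additive goods instance $I'$ with $nm$ goods and nonnegative values $v_i(g')=K-c_i(g)\ge 1$ and $K$. The lemma states that $I'$ admits a PMMS allocation if and only if $I$ does. Since $I$ admits none, neither does $I'$, which completes the goods part.
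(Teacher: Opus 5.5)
Your proposal is correct and follows the paper's proof: chores nonexistence comes from He and Tao's EFX counterexample together with PMMS $\Rightarrow$ EFX, and goods nonexistence then comes from the Karp reduction of Lemma~\ref{lem:reduction_chores_goods}. Your direct derivation of PMMS $\Rightarrow$ EFX via the bipartition $(X_i\setminus\{g\},X_j\cup\{g\})$ is sound. Your caution is also well placed: the counterexample must rule out the positive-cost EFX convention, or else have strictly positive costs. The paper handles both points only by citation.
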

\begin{proof}
    The first part of the theorem follows from the existence of chore instances with no EFX allocation due to~\cite{journals/corr/abs-2606-08872}, and the fact that PMMS implies EFX.
    Given an additive chores instance with no PMMS allocation, we can apply Lemma~\ref{lem:reduction_chores_goods} to construct an additive goods instance with no PMMS allocation, which proves the second part.
\end{proof}

In Appendix~\ref{append:impossibility_ratio}, we extend Theorem~\ref{thm:goods} to its gapped version.
For additive goods, we show that there exists an instance where the approximation ratio of PMMS is at most $\frac{226}{227}$.
The concurrent work of \citet{golz2026pmms} gives a stronger impossibility bound of $\frac{78}{79}$.
For additive chores, we show that there exists an instance where the approximation ratio is at least $\frac{220413}{200000}=1.102065$.

\subsection{NP-hardness}
In this section, we prove the following result.
\begin{theorem}\label{thm:NPhard-chores}
    Given an additive chores instance, it is NP-hard to decide if the instance admits a PMMS allocation.
\end{theorem}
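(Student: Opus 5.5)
The plan follows the route announced in the introduction. The starting point is the NP-completeness result of \citet{hosseini2023fairly}: deciding whether an EFX allocation exists is NP-complete for mixed goods and chores under lexicographic preferences. We first show that, in this lexicographic domain, EFX and PMMS coincide. We then encode lexicographic mixed instances as additive mixed instances. Finally, a padding reduction converts additive mixed instances into additive chore instances while preserving PMMS existence.

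\textbf{Step 1: EFX and PMMS coincide under lexicographic preferences.} Under a lexicographic ranking, bundles are compared through the most important item on which they differ. A good is desirable to include and a chore is undesirable. For a pair of agents $i,j$ and the union $S=X_i\cup X_j$, we determine the optimal $2$-partition of $S$ for agent $i$ explicitly. The most important item $g^*$ of $S$ decides everything.
\begin{itemize}
\item If $g^*$ is a chore, then the PMMS benchmark is the bundle $\{g^*\}$ together with all goods of $S$ placed on its side.
\item If $g^*$ is a good, then the benchmark is the bundle without $g^*$, and it recurses on the remaining items.
\end{itemize}
We compare this with the EFX condition of \citet{hosseini2023fairly}, which removes any chore from one's own bundle or any good from the other's. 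From this comparison we prove the two implications separately. The direction ``PMMS implies EFX'' is standard. The converse uses the lexicographic structure: violating PMMS exhibits a single decisive item whose removal still leaves envy.

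\textbf{Step 2: encode lexicographic preferences additively.} We use magnitudes $2^{\mathrm{rank}}$, signed $+$ for goods and $-$ for chores. Since lexicographic comparison of disjoint bundles is then exactly the sign of the additive difference, all PMMS benchmarks agree. This gives NP-hardness of PMMS existence for additive mixed instances. The magnitudes have polynomial bit length.

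\textbf{Step 3: mixed to chores.} We mimic Lemma~\ref{lem:reduction_chores_goods} in reverse. Each item $g$ with signed value $u_i(g)$ becomes a chore of cost $c_i(g')=K-u_i(g)$, where $K$ is large enough that all costs are positive. We add $(n-1)m$ dummy chores of cost $K$ for everyone. Choosing $K>2\sum_g|u_i(g)|$, the argument on bundle sizes carries over.
\begin{itemize}
\item An agent holding at least $m+1$ chores paired with one holding at most $m-1$ violates PMMS, since an $m$-versus-$m$ split costs less.
\item So every PMMS allocation is equal-cardinality.
\item On equal-cardinality allocations we have $c_i(A'_i)=mK-u_i(A_i)$, and the two-agent shares satisfy the identity $\mu^{I'}=mK-\mu^{I}$. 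Optimal chore partitions are balanced, and any partition of $A_i\cup A_j$ extends to a balanced one using the dummies.
\end{itemize}
Hence PMMS is preserved in both directions.

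\textbf{Main obstacle.} I expect the difficulty to be Step 1, the converse direction EFX $\Rightarrow$ PMMS. It requires matching exactly the EFX variant used in \citet{hosseini2023fairly}: its handling of removing goods versus chores, and zero or absent items. It also requires a careful case analysis on whether the top item of $X_i\cup X_j$ is a good or a chore, and in whose bundle it lies. I would first characterize agent $i$'s PMMS benchmark as a single threshold bundle, then check each EFX removal against it.
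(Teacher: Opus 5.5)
Your proposal is correct and follows the paper's proof essentially step for step. The paper proves the same EFX$\Leftrightarrow$PMMS lemma by the same case analysis on the most important item of $X_i\cup X_j$, builds your Step~2 into its definition of lexicographic valuations as $u(g_i)\in\{2^{m-i},-2^{m-i}\}$, and then uses the identical padding reduction $c_i(g')=K-u_i(g)$ with $(n-1)m$ dummy chores and $K>2T$. Two small imprecisions should be fixed: when the top item $g^*$ is a good, the benchmark is simply the value of all goods of $X_i\cup X_j$ other than $g^*$, so no recursion is needed; and in the cardinality step, the union of a bundle with $r\ge m+1$ chores and one with at most $m-1$ chores need not have exactly $2m$ chores, so you should use a balanced split whose sides each have at most $r-1$ chores, as the paper does, rather than an $m$-versus-$m$ split.
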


Theorem~\ref{thm:NPhard-chores} implies the following theorem due to the Karp-reduction in Lemma~\ref{lem:reduction_chores_goods}.

\begin{theorem}\label{thm:NPhard-goods}
    Given an additive goods instance, it is NP-hard to decide if the instance admits a PMMS allocation.
\end{theorem}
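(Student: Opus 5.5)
The plan is to compose the Karp reduction of Lemma~\ref{lem:reduction_chores_goods} with the NP-hardness result of Theorem~\ref{thm:NPhard-chores}. Let $L_C$ be the set of additive chore instances that admit a PMMS allocation, and $L_G$ the analogous set of additive goods instances. Theorem~\ref{thm:NPhard-chores} says $L_C$ is NP-hard. So it suffices to show $L_C \le_p L_G$, since polynomial-time many-one reductions compose and NP-hardness then transfers to $L_G$.

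Given a chore instance $I=(N,M,c_1,\ldots,c_n)$, I map it to the goods instance $I'$ of Lemma~\ref{lem:reduction_chores_goods}. Its correctness, namely $I\in L_C \iff I'\in L_G$, is exactly the statement of that lemma, which I take as given. Two small points must be handled for the map to be total. If $n=1$, there are no pairs $i\ne j$, so every allocation is trivially PMMS and $I\in L_C$; the reduction then outputs a fixed one-agent goods instance, which is also a yes-instance. The case $m=0$ is already covered inside the lemma.

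The step needing some care, though not hard, is the polynomial running time, as noted in the Remark following the lemma. Under binary encoding of rational costs, $K=1+\max_i\sum_{g}c_i(g)$ is computable in polynomial time. Its bit length is at most the encoding length of the largest $c_i(M)$ plus $O(\log m)$ after taking a common denominator. Each value $K-c_i(g)$ is a difference of two such rationals and therefore has polynomial encoding length. The output has $nm$ goods and $n$ agents, so its total size is polynomial in $|I|$.

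Composing any polynomial reduction from an NP-hard problem to $L_C$ with this map then yields a polynomial reduction to $L_G$. This proves Theorem~\ref{thm:NPhard-goods}.
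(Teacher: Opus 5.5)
Your proposal is correct and follows the paper's argument: the paper obtains Theorem~\ref{thm:NPhard-goods} by composing the NP-hardness for chores (Theorem~\ref{thm:NPhard-chores}) with the Karp reduction of Lemma~\ref{lem:reduction_chores_goods}, with polynomial running time justified by the Remark after that lemma. Your separate treatment of $n=1$ and your encoding-length discussion are valid bookkeeping that the paper leaves implicit.
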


%is % TODO: Supply the proof of Theorem~\ref{thm:NPhard-chores} before treating NP-hardness as an established contribution.
%The statements above assert NP-hardness, without claiming membership in NP. An allocation alone does not provide an immediate polynomial-time certificate, since verifying PMMS requires checking all bipartitions of every pairwise union.
As a remark, the NP-hardness in the two theorems above cannot be replaced by NP-completeness, as verifying if an allocation satisfies PMMS is not known to be polynomial-time solvable (in fact, this decision problem is coNP-complete).

We will prove Theorem~\ref{thm:NPhard-chores} in the remaining part of this section.
Our proof makes use of the result of~\citet{hosseini2023fairly}, which shows the NP-completeness of deciding if there is an EFX allocation in the setting with both goods and chores where the valuations are lexicographical.
We will first introduce the setting of mixed goods and chores with lexicographical valuations, and review the NP-completeness result of~\citet{hosseini2023fairly}.
Then, we will prove Theorem~\ref{thm:NPhard-chores} by a reduction to this NP-completeness result.

\subsubsection{Review of Hosseini et al.'s NP-completeness Result}
We first define the setting with mixed goods and chores.
Let $N$ be the set of agents and $M$ be the set of items as before.
Each agent $i$ has an additive valuation function $u_i:2^M\to\mathbb{R}$.
Notice that in the mixed setting, $u_i(g)$ (for an item $g\in M$) can be both non-negative and negative.
An item $g$ with $u_i(g)\geq0$ is a good for agent $i$; otherwise, it is a chore for agent $i$.
Let $G_i$ and $C_i$ be the sets of goods and chores for agent $i$, respectively.

We then define \emph{lexicographical valuations}.
Instead of using the original definition in~\citet{hosseini2023fairly}, we use the following equivalent definition for the purpose of this paper.
In the setting with additive mixed goods and chores, a valuation function $u$ is \emph{lexicographical} if items can be order $g_1,\ldots,g_m$ such that $u(g_i)\in\{2^{m-i},-2^{m-i}\}$.
Intuitively, an item ranked $i$-th by $u$ is more significant than the combination of all the items ranked after the $i$-th: $|u(g_i)|>\sum_{j>i}|u(g_j)|$.

We next extend the definition of EFX to this mixed setting.
Given an allocation $X=(X_1,\ldots,X_n)$, we say that $i$ envies $j$ if $u_i(X_i)<u_i(X_j)$.
We say that $X$ is EFX if for every pair of agents $i,j$ such that $i$ envies $j$, it holds that
\begin{itemize}
    \item for each $g\in G_i\cap X_j$, we have $u_i(X_i)\geq u_i(X_j\setminus\{g\})$, and
    \item for each $g\in C_i\cap X_i$, we have $u_i(X_i\setminus\{g\})\geq u_i(X_j)$.
\end{itemize}

The following result is due to~\citet{hosseini2023fairly}.

\begin{theorem}[\citet{hosseini2023fairly}]\label{thm:hadi}
    Given an instance with mixed goods and chores where valuations are lexicographical, it is NP-complete to decide if an EFX allocation exists.
\end{theorem}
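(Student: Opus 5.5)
The plan has two parts: membership in NP, and hardness by reduction from a standard NP-complete problem, Exact Cover by 3-Sets or 3-SAT. Both rest on a combinatorial characterisation of EFX under lexicographical valuations.

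\textbf{Membership.} An allocation $X$ is a polynomial-size certificate. For every ordered pair $(i,j)$ and every item $g$, the EFX condition involves only a constant number of additive sums. Since $u_i(g)\in\{\pm 2^{m-k}\}$ has $O(m)$ bits, all these checks run in polynomial time.

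\textbf{Characterisation.} Fix agents $i\neq j$ and let $h$ be the item of $X_i\cup X_j$ ranked highest by $u_i$. By the domination property $|u_i(g_k)|>\sum_{\ell>k}|u_i(g_\ell)|$, agent $i$ envies $j$ exactly when $h$ is a good in $X_j$ or a chore in $X_i$. I then translate the EFX clauses.
\begin{itemize}
\item Suppose $h$ is a good in $X_j$. Removing $h$ itself must end the envy, so after deleting $h$ the top item of the remaining union must favour $i$. Removing any lower good must also end the envy, but it leaves $h$ in $X_j$ and the envy persists. Hence EFX forces $X_j\cap G_i=\{h\}$ and that $i$ prefers $X_i$ to $X_j\setminus\{h\}$.
\item The chore case is symmetric: $h$ must be the only chore of $i$ in $X_i$, and $i$ must prefer $X_i\setminus\{h\}$ to $X_j$.
\end{itemize}
This yields a purely ordinal, local description of EFX. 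A bundle containing $i$'s top-ranked relevant good forces the remaining goods, in $i$'s ranking, to be scarce in that bundle.

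\textbf{Hardness.} I would reduce from 3-SAT. The gadgets are built from agent-specific rankings as follows.
\begin{itemize}
\item \emph{Variable gadgets.} For each variable, two agents share two ``literal'' goods that both rank at the top. By the characterisation, an EFX allocation must split these goods one each, and the orientation of the split encodes the truth value.
\item \emph{Clause gadgets.} For each clause, a clause agent ranks a private chore very highly. She can hold it without violating EFX only if some neighbouring literal agent's bundle contains no good she values above a designated threshold item. This happens exactly when one of her three literals is set true.
\item \emph{Dummy agents and filler chores.} These absorb leftover items, and everyone else ranks the filler chores lowest, so the fillers never create envy.
\end{itemize}
Correctness then has two directions. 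Given a satisfying assignment, I allocate explicitly and verify EFX pair by pair using the characterisation. Conversely, given an EFX allocation, the variable gadgets force a consistent assignment and the clause constraints force satisfaction.

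\textbf{Main obstacle.} I expect the hardest step to be the soundness direction: ruling out unintended EFX allocations in which items migrate across gadgets, for example a literal good landing with a dummy or clause agent. I would first try to prevent this by giving every agent a personal ``anchor'' item ranked above everything else. Any bundle holding a foreign high-ranked item would then trigger envy with two removable goods, which violates the characterisation. This should pin the allocation to the intended gadget structure, after which the logical equivalence is routine.
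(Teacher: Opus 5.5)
The paper gives no proof of this statement; it cites it from Hosseini et al. Your NP-membership argument is fine. The NP-hardness part, which is the whole substance of the theorem, is only an outline: you specify no concrete items, rankings, or two-direction verification. The pieces you do specify do not work as stated.

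First, your characterisation is incomplete. Suppose the top item $h$ of $X_i\cup X_j$ under $u_i$ is a good in $X_j$. Removing any chore of $i$ from $X_i$ still leaves $h$ in $X_j$, so EFX forces $X_i\cap C_i=\emptyset$ in addition to $X_j\cap G_i=\{h\}$. Symmetrically, in the chore case EFX also forces $X_j\cap G_i=\emptyset$. The correct local description, which is exactly what Cases~1 and~2 in the proof of Lemma~\ref{lem:EFX-PMMS} derive, has three alternatives:
\begin{itemize}
\item $i$ does not envy $j$;
\item $h\in X_j\cap G_i$, $X_j\setminus\{h\}\subseteq C_i$ and $X_i\subseteq G_i$;
\item $h\in X_i\cap C_i$, $X_i\setminus\{h\}\subseteq G_i$ and $X_j\subseteq C_i$.
\end{itemize}
Your preference clauses are then automatic. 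Checking the completeness direction against your weaker version would accept allocations that are not EFX.

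Second, the gadgets fail as described. Take a clause agent $c$ who holds a chore she ranks above every item of $X_c\cup X_j$. She envies $j$, and by the correct characterisation she needs $X_j\subseteq C_c$. This must hold for \emph{every} $j\neq c$. Her pairwise constraints therefore give a universal condition, not the existential ``some literal agent's bundle is clean'' condition you need. Since EFX is a conjunction over ordered pairs, a disjunction over three literals must be encoded through a choice in the allocation itself. For example, an item could be required to go to one of three literal-side agents and be placeable there only when that literal is true. The proposal provides no such mechanism.

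The variable gadget's claim that the two literal goods must be split one each is also false as stated, since a third agent may take one or both (as you note yourself). The anchor fix backfires. An agent who holds a good she ranks above all other items sees that good at the top of every union $X_i\cup X_j$. She therefore never envies anyone, and every constraint originating from her disappears, including those your gadgets are meant to impose. So the soundness direction is not routine after a pinning step; the reduction itself still has to be designed and verified. Either cite the result as the paper does, or give a fully specified reduction and check both directions using the corrected characterisation.
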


Finally, we define the notion of PMMS for this mixed setting.
For $S\subseteq M$ and each agent $i$, define
$$\mu_i(2,S):=\max_{(P_1,P_2)\in\Pi_2(S)}\min\{u_i(P_1),u_i(P_2)\}$$
as in the setting with goods.
An allocation $X=(X_1,\ldots,X_n)$ is PMMS if
$$u_i(X_i)\geq\mu_i(2,X_i\cup X_j)\qquad\mbox{for all }i\neq j.$$
Notice that this definition is the same as the one for goods, except that $u_i$ can be negative now.

\subsubsection{Proof of Theorem~\ref{thm:NPhard-chores}}
We first show the following lemma, which shows that, under lexicographical valuations, the notions of EFX and PMMS are equivalent.
This lemma may be of independent interest.
After this, we will prove Theorem~\ref{thm:NPhard-chores} by a reduction from the mixed instance to the chores-only instance that uses the same ideas in Lemma~\ref{lem:reduction_chores_goods}.

\begin{lemma}\label{lem:EFX-PMMS}
    Given an instance with mixed goods and chores and lexicographical valuations, an allocation $X$ is EFX if and only if it is PMMS.
\end{lemma}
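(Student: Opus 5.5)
We prove the two implications separately. The forward direction, PMMS $\Rightarrow$ EFX, holds for any additive valuation with nonzero item values, which lexicographic valuations have. The converse, EFX $\Rightarrow$ PMMS, uses the lexicographic structure. Throughout, we use the basic fact that, for a lexicographic $u_i$, two distinct bundles $S\neq T$ have $u_i(S)\neq u_i(T)$. Moreover, the comparison between them is decided by the most significant item in $S\triangle T$: the bundle containing it is better if it is a good and worse if it is a chore. This follows from $|u_i(g_k)|>\sum_{\ell>k}|u_i(g_\ell)|$.

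\emph{PMMS $\Rightarrow$ EFX.} Suppose $i$ envies $j$.

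For a good $g\in G_i\cap X_j$, consider the bipartition $(X_i\cup\{g\},X_j\setminus\{g\})$ of $X_i\cup X_j$. PMMS gives
\[
u_i(X_i)\ge\min\{u_i(X_i)+u_i(g),\,u_i(X_j\setminus\{g\})\}.
\]
Since $u_i(g)>0$, the minimum cannot be the first term, so $u_i(X_i)\ge u_i(X_j\setminus\{g\})$.

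For a chore $g\in C_i\cap X_i$, use the bipartition $(X_i\setminus\{g\},X_j\cup\{g\})$. The first part has value strictly above $u_i(X_i)$, so PMMS forces $u_i(X_j)+u_i(g)\le u_i(X_i)$. This is exactly $u_i(X_i\setminus\{g\})\ge u_i(X_j)$.

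\emph{EFX $\Rightarrow$ PMMS.} Fix $i\neq j$, let $S=X_i\cup X_j$, and let $h$ be the most significant item of $S$ for $u_i$. Write $G_S=G_i\cap S$. We first compute $\mu_i(2,S)$ explicitly.

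If $h$ is a good, then in any bipartition the part without $h$ is the worse one. Its value is maximized by taking all goods other than $h$, so $\mu_i(2,S)=u_i(G_S\setminus\{h\})$.

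If $h$ is a chore, the part containing $h$ is the worse one. Its value is maximized by adding all goods to it, so $\mu_i(2,S)=u_i(h)+u_i(G_S)$.

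We then split into four cases according to the type of $h$ and whether $h\in X_i$ or $h\in X_j$.

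\begin{itemize}
\item $h$ is a good in $X_i$. Then $X_i$ beats $G_S\setminus\{h\}$, because $h$ is the top item of the symmetric difference.
\item $h$ is a chore in $X_j$. Then $u_i(X_i)>u_i(h)+u_i(G_S)$, because $h$ lies in the latter bundle and not in $X_i$.
\item $h$ is a good in $X_j$. Then $i$ envies $j$, and EFX restricts the allocation. If $X_i$ contained any chore $c$, then $X_i\setminus\{c\}$ lacks $h$, so $u_i(X_i\setminus\{c\})<u_i(X_j)$, a violation. If $X_j$ contained a good $g\neq h$, then $X_j\setminus\{g\}$ still contains $h$ and beats $X_i$, again a violation. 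Hence $X_i\subseteq G_S$ and $X_j\setminus\{h\}$ consists only of chores. Therefore $X_i=G_S\setminus\{h\}$, and $u_i(X_i)=\mu_i(2,S)$.
\item $h$ is a chore in $X_i$. Then $i$ envies $j$. Symmetrically, EFX excludes any other chore in $X_i$ and any good in $X_j$. Hence $X_i=\{h\}\cup G_S$, and again $u_i(X_i)=\mu_i(2,S)$.
\end{itemize}

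We expect the main obstacle to be the last two cases. There one has to see that a single EFX removal test per item, combined with the dominance of $h$, pins down the bundles exactly. The key observation is that removing any item other than $h$ leaves the side containing $h$ decisive, which immediately contradicts EFX.
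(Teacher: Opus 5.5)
Your proof is correct and follows the paper's argument: the forward direction uses the same two bipartitions, and in the envious cases ($h$ a good in $X_j$, or a chore in $X_i$) you pin down $X_i=G_S\setminus\{h\}$ and $X_i=\{h\}\cup G_S$ exactly as the paper does. The only difference is in the non-envious case. The paper handles it at once via $\mu_i(2,S)\le u_i(S)/2\le u_i(X_i)$, instead of your two extra lexicographic cases, and this bound also covers $S=\emptyset$, where your $h$ is undefined (there, trivially, $\mu_i(2,\emptyset)=0=u_i(X_i)$).
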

\begin{proof}
Notice that, in our definition of lexicographical valuations, every good has strictly positive value and every chore has strictly negative value.

\paragraph{PMMS implies EFX.}
Suppose that $X$ is not EFX. Then there exist distinct agents $i,j$ for whom one of the following violations occurs.

First, suppose there is a good $g\in G_i\cap X_j$ such that $u_i(X_i)<u_i(X_j\setminus\{g\})$.
Since $u_i(g)>0$, the bipartition
$$
    \bigl(X_i\cup\{g\},\,X_j\setminus\{g\}\bigr)
    \in\Pi_2(X_i\cup X_j)
$$
has both parts strictly more valuable to agent $i$ than $X_i$.
Consequently, $\mu_i(2,X_i\cup X_j)>u_i(X_i)$, violating PMMS.

Second, suppose there is a chore $g\in C_i\cap X_i$ such that $u_i(X_i\setminus\{g\})<u_i(X_j)$.
Since $u_i(g)<0$, we have $u_i(X_i\setminus\{g\})>u_i(X_i)$,
and the assumed EFX violation also gives $u_i(X_j\cup\{g\})=u_i(X_j)+u_i(g)>u_i(X_i)$.
Thus, the bipartition
$$
    \bigl(X_i\setminus\{g\},\,X_j\cup\{g\}\bigr)
    \in\Pi_2(X_i\cup X_j)
$$
again has both parts strictly more valuable to agent $i$ than $X_i$, contradicting PMMS. Therefore, every PMMS allocation is EFX.

\paragraph{EFX implies PMMS.}
Suppose that $X$ is EFX, and fix distinct agents $i,j$.
Let
\[
    S:=X_i\cup X_j,
    \qquad
    G:=G_i\cap S,
    \qquad
    C:=C_i\cap S.
\]
If $u_i(X_i)\geq u_i(X_j)$, additivity immediately gives
\[
    \mu_i(2,S)
    \leq \frac{u_i(S)}{2}
    =\frac{u_i(X_i)+u_i(X_j)}{2}
    \leq u_i(X_i).
\]
Hence, assume that $u_i(X_i)<u_i(X_j)$.

Let $e$ be the most important item in $S$ according to
agent $i$. Lexicographical valuations satisfy
\[
    |u_i(e)|
    >
    \sum_{g\in S\setminus\{e\}}|u_i(g)|.
\]
Thus, in any bipartition of $S$, the part containing $e$ is strictly preferred if $e$ is a good and strictly dispreferred if $e$ is a chore. Since agent $i$ envies agent $j$, there are only two cases.

\medskip
\noindent\emph{Case 1: $e\in G_i\cap X_j$.}
The bundle $X_i$ cannot contain a chore: removing such a chore from $X_i$ would leave the dominant good $e$ in $X_j$, so agent $i$ would still envy agent $j$, contrary to EFX. 
Similarly, $X_j$ cannot contain another good
besides $e$: removing that other good would again leave $e$ in $X_j$, and envy would persist. Therefore,
\[
    X_i=G\setminus\{e\},
    \qquad
    X_j=C\cup\{e\}.
\]

Now consider any bipartition $(P_1,P_2)\in\Pi_2(S)$.
One of its parts, say $P_1$, does not contain $e$.
Its value is at most the total value of all goods in
$S\setminus\{e\}$, so
\[
    \min\{u_i(P_1),u_i(P_2)\}
    \leq u_i(P_1)
    \leq u_i(G\setminus\{e\})
    =u_i(X_i).
\]
Taking the maximum over all bipartitions yields
$\mu_i(2,S)\leq u_i(X_i)$.

\medskip
\noindent\emph{Case 2: $e\in C_i\cap X_i$.}
The bundle $X_i$ cannot contain another chore besides
$e$: removing that other chore would leave the dominant chore $e$ in $X_i$, so agent $i$ would still envy agent $j$. 
Likewise, $X_j$ cannot contain a good: removing
that good would leave $e$ in $X_i$, and envy would
persist. 
Both possibilities contradict EFX. 
Hence,
\[
    X_i=G\cup\{e\},
    \qquad
    X_j=C\setminus\{e\}.
\]

For any bipartition $(P_1,P_2)\in\Pi_2(S)$, one part,
say $P_1$, contains $e$. 
Its value is at most the value of $e$ together with all goods in $S$. 
Thus,
\[
    \min\{u_i(P_1),u_i(P_2)\}
    \leq u_i(P_1)
    \leq u_i(e)+u_i(G)
    =u_i(X_i).
\]
Again, maximizing over all bipartitions gives
$\mu_i(2,S)\leq u_i(X_i)$.

In every case, agent $i$ satisfies PMMS with respect
to agent $j$. Since $i,j$ were arbitrary, $X$ is PMMS.
\end{proof}

Now, we are ready to prove Theorem~\ref{thm:NPhard-chores}.
Given a mixed lexicographical instance $I=(N,M,u_1,\ldots,u_n)$, we construct a chores-only instance $I'=(N,M',c_1,\ldots,c_n')$ as follows.
Let $K=2^{2m+1}$.
For each $g\in M$, construct a chore $g'\in M'$ such that $c_i(g')=K-u_i(g)$ for each $i\in N$.
After this, add $(n-1)m$ additional chores such that each agent has cost $K$ for each of them.
Based on Lemma~\ref{lem:EFX-PMMS}, we can conclude Theorem~\ref{thm:NPhard-chores} if we can show that $I$ admits a PMMS allocation if and only if $I'$ does.
We establish this equivalence by an analogous padding
argument to that in Lemma~\ref{lem:reduction_chores_goods},
with the max--min and min--max benchmarks interchanged.

Let
\[
    T:=\max_{i\in N}\sum_{g\in M}|u_i(g)|=2^m-1.
\]
Our choice of $K$ satisfies $K>2T$. For every
$Y\subseteq M'$, define
\[
    \pi(Y):=\{g\in M:g'\in Y\}.
\]
Then
\[
    c_i(Y)=K|Y|-u_i(\pi(Y)),
    \qquad
    K|Y|-T\leq c_i(Y)\leq K|Y|+T.
\]

First, every PMMS allocation $X'$ of $I'$ gives exactly $m$ chores to each agent. 
Otherwise, some agent $i$ receives $r\geq m+1$ chores and another agent $j$ receives at most $m-1\leq r-2$ chores. 
Their combined bundle can
be partitioned into two parts, each containing at most $r-1$ chores. 
Each part then costs agent $i$ at most
\[
    (r-1)K+T<rK-T\leq c_i(X'_i),
\]
contradicting PMMS.

Now consider any allocation $X'$ giving exactly $m$
chores to each agent, and let $X_i:=\pi(X'_i)$.
For each pair $i\neq j$, every min--max-optimal
bipartition of $X'_i\cup X'_j$ is $m$-versus-$m$:
every such equal-cardinality bipartition has maximum
cost at most $mK+T$, whereas an unequal-cardinality
bipartition has a part costing at least
\[
    (m+1)K-T>mK+T.
\]
As in Lemma~\ref{lem:reduction_chores_goods}, every
bipartition of $X_i\cup X_j$ can be completed to an
$m$-versus-$m$ bipartition of $X'_i\cup X'_j$ using
the additional chores.

Writing $\mu_i^I$ for the max--min benchmark in the
mixed instance and $\mu_i^{I'}$ for the min--max
benchmark in the chore instance, we therefore obtain
\begin{align*}
    \mu_i^{I'}(2,X'_i\cup X'_j)
    &=
    \min_{(P_1,P_2)\in\Pi_2(X_i\cup X_j)}
    \max\{mK-u_i(P_1),\,mK-u_i(P_2)\}\\
    &=
    mK-\mu_i^I(2,X_i\cup X_j).
\end{align*}
Since $c_i(X'_i)=mK-u_i(X_i)$, it follows that
\[
    c_i(X'_i)\leq\mu_i^{I'}(2,X'_i\cup X'_j)
    \quad\Longleftrightarrow\quad
    u_i(X_i)\geq\mu_i^I(2,X_i\cup X_j).
\]
Thus, every PMMS allocation of $I'$ decodes to a PMMS
allocation of $I$. 
Conversely, any PMMS allocation $X$ of $I$ can be extended to an allocation of $I'$ by giving agent $i$ the copies of the items in $X_i$ and $m-|X_i|$ additional chores. 
The resulting allocation is PMMS by the same equivalence.

Finally, the construction produces $nm$ chores,
and every singleton cost is a positive integer
with $O(m)$ bits. 
Hence, the construction is polynomial-time. 
Together with Theorem~\ref{thm:hadi} and Lemma~\ref{lem:EFX-PMMS}, this proves Theorem~\ref{thm:NPhard-chores}.

\section{Existence of \texorpdfstring{$\frac43$}{4/3}-PMMS Allocations for Chores}
\label{sec:chores-existence}

In this section, we establish the existence of approximate-PMMS allocations for additive chores. Our proof starts from a pEF1 allocation together with supporting prices, obtained from the existence proof of \citet{conf/soda/Mahara26}. We then transform this allocation through a reassignment of selected chores followed by a single pass of local exchanges. The key objective is to obtain an allocation in which every agent either receives a singleton bundle or incurs cost at most twice the cost she assigns to any other bundle. We show that this structural property is sufficient to guarantee a $\frac43$-approximation of PMMS. We first carry out the construction for instances with strictly positive costs that are non-degenerate, and then extend the result to general additive-chore instances by a perturbation and continuity argument.

\begin{theorem}\label{thm:chores-existence}
Every additive-chore instance admits a $\frac43$-PMMS allocation.
Consequently,
\[
    \rhoPMMS \leq \frac43.
\]
\end{theorem}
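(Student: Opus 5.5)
The plan has two parts. The first is a purely structural sufficient condition for $\frac43$-PMMS. The second is a construction of an allocation satisfying that condition, starting from a price-supported pEF1 allocation.

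\emph{Structural lemma.} Let $X$ be an allocation such that every agent $i$ either has $|X_i|=1$, or satisfies $c_i(X_i)\le 2\,c_i(X_j)$ for all $j\ne i$. We claim $X$ is $\frac43$-PMMS.
\begin{itemize}
\item If $X_i=\{g\}$, every bipartition of $X_i\cup X_j$ has a part containing $g$. Hence $\ChoreMu_i(2,X_i\cup X_j)\ge c_i(g)=c_i(X_i)$.
\item Otherwise, additivity gives
\[
\ChoreMu_i(2,X_i\cup X_j)\ge \tfrac12\bigl(c_i(X_i)+c_i(X_j)\bigr)\ge \tfrac12\bigl(c_i(X_i)+\tfrac12 c_i(X_i)\bigr)=\tfrac34\,c_i(X_i).
\]
\end{itemize}
In both cases $c_i(X_i)\le\frac43\ChoreMu_i(2,X_i\cup X_j)$. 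Zero-benchmark pairs need separate care: under strictly positive costs, $\ChoreMu_i=0$ forces $X_i=X_j=\emptyset$.

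\emph{Construction for positive, non-degenerate costs.} Take from \citet{conf/soda/Mahara26} a pEF1 allocation $X$ with prices $p$. Let $\alpha_i$ be agent $i$'s minimum pain-per-buck ratio, so that $c_i(g)=\alpha_i p(g)$ for $g\in X_i$ and $c_i(g)\ge \alpha_i p(g)$ for every chore $g$. pEF1 gives $p(X_i\setminus\{g^*_i\})\le p(X_j)$, where $g^*_i$ is the highest-priced chore in $X_i$. Therefore
\[
c_i(X_i\setminus\{g^*_i\})=\alpha_i\,p(X_i\setminus\{g^*_i\})\le \alpha_i\,p(X_j)\le c_i(X_j).
\]
If $g^*_i$ is not dominant, i.e.\ $c_i(g^*_i)\le c_i(X_i\setminus\{g^*_i\})$, then $c_i(X_i)\le 2c_i(X_i\setminus\{g^*_i\})\le 2c_i(X_j)$ and agent $i$ is already fine.

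The troublesome agents are those whose top chore is dominant, $c_i(g^*_i)>c_i(X_i\setminus\{g^*_i\})$. For them the plan is a reassignment step. Such an agent keeps only $g^*_i$ and becomes a singleton agent. The stripped chores $X_i\setminus\{g^*_i\}$ are handed to agents already holding low-price bundles, chosen using the price structure so that the recipients' own inequalities $c_k(X_k)\le 2c_k(X_\ell)$ are preserved up to a controlled slack. Afterwards, we do a single pass of local exchanges. Whenever some non-singleton agent $i$ still has $c_i(X_i)>2c_i(X_j)$, we swap $i$ into $X_j$, or give $i$ a singleton from her bundle. We then argue via a potential (e.g.\ the multiset of prices or the number of singleton agents) that each exchange fixes the violation without creating new ones, so that one pass suffices.

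\emph{Main obstacle and extension.} I expect the main obstacle to be this reassignment/exchange step. Removing chores from dominant-chore agents increases other bundles, which can break the factor-$2$ inequality for third parties. The first thing to try is to route the removed chores to agents whose bundle price is smallest, and to use MPB inequalities plus non-degeneracy (no ties in price ratios) to bound the new costs by the old envy bounds.

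Finally, extend to nonnegative costs as follows. Perturb $c^{(t)}_i(g)=c_i(g)+\varepsilon_t+\delta_{i,g,t}$ with $\varepsilon_t,\delta_{i,g,t}\to0$ chosen to ensure non-degeneracy, and take $\frac43$-PMMS allocations $X^{(t)}$. Since there are finitely many allocations, some fixed $X$ appears infinitely often. Both $c_i(X_i)$ and $\ChoreMu_i(2,\cdot)$ are continuous in the costs, being minima of finitely many linear functions, so $X$ satisfies $c_i(X_i)\le\frac43\ChoreMu_i(2,X_i\cup X_j)$ in the limit. This gives $\rhoPMMS\le\frac43$.
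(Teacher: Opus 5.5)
\textbf{There is a genuine gap: the construction, which is the substance of the theorem, is missing, and the route you sketch would fail.} Your sufficient condition (singleton, or $c_i(X_i)\le 2c_i(X_j)$ for all $j\ne i$) and your finitely-many-allocations limit argument are correct and are what the paper uses. Your additive perturbation is an acceptable substitute for the paper's removal of zero-cost chores followed by a multiplicative prime perturbation.

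The problems with the sketched route are these.
\begin{enumerate}
\item Handing the stripped chores $X_i\setminus\{g_i^*\}$ to another agent $k$ cannot be controlled by prices. The MPB inequality $c_k(g)\ge\alpha_k p_g$ only \emph{lower}-bounds what $k$ pays for chores she did not own. So $c_k(X_i\setminus\{g_i^*\})$ can be arbitrarily large relative to its price, and a non-singleton recipient can then violate the factor-$2$ condition, and $\frac43$-PMMS, by an unbounded amount. Non-degeneracy constrains only products of ratios along cycles, not magnitudes, so it does not help.
\item Dominance is the wrong trigger. Normalize $\alpha_i=1$, write $Y$ for the pEF1 allocation, and let $\tau:=\min_j p(Y_j)$. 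A dominant agent whose top chore has price at most $\tau$ already satisfies $c_i(Y_i)=p(Y_i)\le 2\tau\le 2c_i(Y_j)$. Turning her into the singleton $\{g_i^*\}$ creates a bundle whose price may lie well below $\tau$. That can break $c_j(X_j)\le 2c_j(X_i)$ for other agents $j$ whose only guarantee was $c_j(X_j)\le 2\tau\le 2c_j(X_i)$.
\item The claim that a potential makes one pass of exchanges suffice is asserted, not argued.
\end{enumerate}

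The missing idea is to never move a residual bundle away from its owner and to move only high-price chores. Call $i$ high if her top chore $h_i'$ has $p_{h_i'}>\tau$. Then $S_i:=Y_i\setminus\{h_i'\}$ satisfies $c_i(S_i)=p(S_i)\le\tau$, and every chore in the set $H$ of high agents' top chores costs more than $\tau$ to everyone.

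The paper first redistributes $H$ among high agents by serial dictatorship in a fixed order, each taking her cheapest remaining chore $h_i$. This gives $c_i(h_i)\le c_i(h_k)$ for $i<k$. Then, in one pass in the same order, if $c_i(S_i\cup\{h_i\})>2\min_{\ell\ne i}c_i(X_\ell)$, the pair $(X_i,X_\ell)$ is replaced by $(S_i\cup X_\ell,\{h_i\})$. This has the following consequences:
\begin{enumerate}
\item Agent $i$ pays $c_i(S_i)+c_i(X_\ell)\le 2c_i(X_\ell)$, since $c_i(S_i)\le\tau\le c_i(X_\ell)$.
\item Agent $\ell$ becomes a singleton, and every bundle keeps price at least $\tau$.
\item An unprocessed agent $k$ is never chosen as $\ell$. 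An exchange at turn $t$ needs $c_t(X_\ell)<\frac12\bigl(c_t(S_t)+c_t(h_t)\bigr)$, whereas $c_t(S_k\cup\{h_k\})\ge c_t(h_t)$ exceeds that quantity.
\item Later singletons $\{h_k\}$ and merged bundles $S_k\cup B$ never fall below a processed agent's established lower bound, again by the serial-dictatorship order.
\end{enumerate}
Your proposal lacks this ordering-based invariant argument. In it, every chore that changes hands lands either in a singleton or in a bundle whose cost to the recipient is certified by her own price-supported residual plus her own cheapest alternative.
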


\subsection{Proof Approach and Basic Observations}
Before presenting our construction, we first discuss the price-based ingredient from prior work~\cite{conf/sigecom/BarmanKV18,conf/soda/Mahara26} and explain why an additional transformation is needed for PMMS.

\paragraph{Price-EF1 as a starting point.}
Price envy-freeness up to one item (pEF1) was introduced by \citet{conf/sigecom/BarmanKV18} and has subsequently been used in market-based approaches for both goods and chores; see, e.g., \citet{conf/soda/Mahara26}. Let $p=(p_g)_{g\in M}$ be a positive price vector, where $p_g$ denotes the price of chore $g$. For any bundle $S\subseteq M$, let $p(S):=\sum_{g\in S}p_g$. For a nonempty bundle $S$, define $\hat p(S):=p(S)-\max_{g\in S}p_g$, and set $\hat p(\emptyset):=0$. Thus, $\hat p(S)$ is the price of $S$ after removing a maximum-price chore. An allocation $Y=(Y_i)_{i\in N}$ is pEF1 if, for every pair of agents $i,j\in N$, either $Y_i=\emptyset$, or there exists a chore $g\in Y_i$ such that $p(Y_i\setminus\{g\})\leq p(Y_j)$. Equivalently, as observed by \citet{conf/soda/Mahara26}, $Y$ is pEF1 if $\max_{i\in N}\hat p(Y_i)\leq\min_{j\in N}p(Y_j)$.

For our proof, we use the pEF1 allocation obtained in the existence proof of \citet{conf/soda/Mahara26}, together with its supporting prices. After independently rescaling the agents' cost functions by positive factors, these prices satisfy $p_g\leq c_i(g)$ for every $i\in N$ and $g\in M$, with equality whenever $g\in Y_i$. Hence, $p(S)\leq c_i(S)$ for every agent $i$ and bundle $S\subseteq M$, while $p(S)=c_i(S)$ whenever $S\subseteq Y_i$. More importantly, combining these supporting-price relations with pEF1 yields two complementary properties: after removing a maximum-price chore from an allocated bundle, the remaining price is no larger than the minimum price of an allocated bundle, while every allocated bundle has cost at least this common price threshold to every agent. These properties allow us to convert the pEF1 structure into the cost comparisons used in our transformation.

\paragraph{Why pEF1 is not sufficient for PMMS.}
A supported pEF1 allocation need not itself be $\frac43$-PMMS. Consider two identical agents and three chores $g_1,g_2,g_3$, as shown in Table~\ref{tab:pef1-counterexample}.

\begin{table}[!ht]
\centering
\caption{A supported pEF1 allocation that is not $\frac43$-PMMS.}
\label{tab:pef1-counterexample}
\begin{tabular}{c|c|c|c|c}
\hline
Chore & $c_1(g)$ & $c_2(g)$ & $p_g$ & Owner \\
\hline
$g_1$ & $2$ & $2$ & $2$ & Agent $1$ \\
$g_2$ & $1$ & $1$ & $1$ & Agent $1$ \\
$g_3$ & $1$ & $1$ & $1$ & Agent $2$ \\
\hline
\end{tabular}
\end{table}

Let $Y_1=\{g_1,g_2\}$ and $Y_2=\{g_3\}$. Since the prices coincide with the costs of both agents, the supporting-price conditions are satisfied. Moreover, $\hat p(Y_1)=1$ and $\hat p(Y_2)=0$, while $\min\{p(Y_1),p(Y_2)\}=1$. Hence, $Y$ is pEF1.
However, agent $1$ incurs cost $c_1(Y_1)=3$. To evaluate $\ChoreMu_1(2,Y_1\cup Y_2)$ for $Y_1\cup Y_2=\{g_1,g_2,g_3\}$, consider the partition $(P_1,P_2)=(\{g_1\},\{g_2,g_3\})\in\PiTwo(Y_1\cup Y_2)$. Both parts have cost $2$, so $\ChoreMu_1(2,Y_1\cup Y_2)\leq 2$. On the other hand, the total cost of $Y_1\cup Y_2$ is $4$, so every partition in $\PiTwo(Y_1\cup Y_2)$ has a part of cost at least $2$. Hence, $\ChoreMu_1(2,Y_1\cup Y_2)=2$. Thus, the pairwise PMMS ratio of agent $1$ against agent $2$ is $c_1(Y_1)/\ChoreMu_1(2,Y_1\cup Y_2)=3/2>4/3$.
This example highlights the limitation of pEF1 for our purpose. The pEF1 condition controls an agent's bundle only after one high-price chore is removed, whereas PMMS compares the cost of the entire bundle with the two-way maximin-share benchmark of the union. Thus, a supported pEF1 allocation provides useful structure, but an additional transformation is needed to obtain the $\frac43$-PMMS guarantee.

\paragraph{High-level idea.}
Our goal is to transform the supported pEF1 allocation into an allocation $X=(X_i)_{i\in N}$ with the following structural property: \emph{for every agent $i$, either $X_i$ is a singleton, or $c_i(X_i)\leq 2c_i(X_j)$ for every $j\neq i$}. As shown below, either condition is sufficient to guarantee $\frac43$-PMMS.

To achieve this property, we first identify a set of high-price chores and reassign them according to a fixed order. We then process the corresponding high agents in the same order. When an agent's current bundle costs more than twice her least-costly bundle among the other agents, we perform a local exchange. The ordered reassignment ensures that singletons created by later exchanges remain sufficiently costly to agents processed earlier. Thus, for each processed high agent, later exchanges either preserve the established pairwise cost bound or assign her a singleton, which satisfies exact PMMS.

We first recall two standard lower bounds on the maximin-share benchmark for chores. The following is the specialization of \citet[Lemma~2.1]{journals/aamas/SunCD23a} to two-way partitions.

\begin{lemma}[Basic lower bounds]
\label{lem:basic-bound}
For every agent $i\in N$ and every bundle $S\subseteq M$, we have $\ChoreMu_i(2,S)\geq c_i(S)/2$. Moreover, for every chore $g\in S$, we have $\ChoreMu_i(2,S)\geq c_i(g)$.
\end{lemma}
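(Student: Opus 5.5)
Both inequalities follow from one observation: for any ordered bipartition $(P_1,P_2)\in\PiTwo(S)$, the larger of the two costs dominates both the average and the cost of whichever part contains a fixed chore. We will show each bound holds for an arbitrary bipartition and then take the minimum over $\PiTwo(S)$. Since the bound is uniform across all bipartitions, the minimum inherits it; in particular it holds for the minimizing bipartition, which exists because $\PiTwo(S)$ is finite.

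\emph{First bound.} Fix $(P_1,P_2)\in\PiTwo(S)$. The sets $P_1,P_2$ partition $S$ and $c_i$ is additive, so $c_i(P_1)+c_i(P_2)=c_i(S)$. The maximum of two reals is at least their average, which gives
\[
\max\{c_i(P_1),c_i(P_2)\}\ \ge\ \tfrac12\bigl(c_i(P_1)+c_i(P_2)\bigr)=\tfrac12 c_i(S).
\]
Minimizing over all bipartitions yields $\ChoreMu_i(2,S)\ge c_i(S)/2$.

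\emph{Second bound.} Fix a chore $g\in S$ and a bipartition $(P_1,P_2)\in\PiTwo(S)$. Exactly one part contains $g$; call it $P_k$. Costs are nonnegative and additive, so monotonicity gives $c_i(P_k)\ge c_i(g)$. Hence $\max\{c_i(P_1),c_i(P_2)\}\ge c_i(P_k)\ge c_i(g)$. Minimizing over bipartitions gives $\ChoreMu_i(2,S)\ge c_i(g)$.

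There is no real obstacle in this argument. The only points needing care are that nonnegativity of costs is used in the second bound, for monotonicity, and that empty parts in $\PiTwo(S)$ are allowed and cause no trouble, since they have cost $0$. This is the two-way specialization of \citet[Lemma~2.1]{journals/aamas/SunCD23a}, and the argument is self-contained.
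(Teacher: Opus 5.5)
Your proof is correct. Both bounds hold for every bipartition in $\PiTwo(S)$: the first by additivity and because a maximum is at least the average, the second by nonnegativity of costs. They therefore pass to the minimum. The paper gives no proof of its own and only cites the lemma as the two-way specialization of Lemma~2.1 of Sun et al.; your argument is the standard direct verification of that fact.
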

The first bound yields the following relation, which explains the factor $2$ used in our exchange rule.
\begin{lemma}[Two-load bound]
\label{lem:2-load-bound}
Let $A,B\subseteq M$ be disjoint, and let $x:=c_i(A)$ and $y:=c_i(B)$. If $x\leq 2y$, then $x\leq \frac43\ChoreMu_i(2,A\cup B)$. Equivalently, if $x>\frac43\ChoreMu_i(2,A\cup B)$, then $x>2y$.
\end{lemma}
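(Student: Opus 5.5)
The plan is to apply the first inequality of Lemma~\ref{lem:basic-bound} to the union $A\cup B$ and then finish with a one-line algebraic estimate. Since $A$ and $B$ are disjoint and $c_i$ is additive, we have $c_i(A\cup B)=x+y$. Lemma~\ref{lem:basic-bound} then gives
\[
  \ChoreMu_i(2,A\cup B)\;\geq\;\frac{x+y}{2}.
\]
Thus it suffices to show $x\leq \frac43\cdot\frac{x+y}{2}=\frac{2(x+y)}{3}$, which rearranges to $3x\leq 2x+2y$, that is, $x\leq 2y$. This is exactly the hypothesis, so chaining the two inequalities yields $x\leq\frac43\ChoreMu_i(2,A\cup B)$.

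The second sentence of the statement is simply the contrapositive of the first, so no further argument is needed for it. I do not expect any real obstacle here. The only points to check are the following.
\begin{itemize}
\item Disjointness is what makes $c_i(A\cup B)=x+y$, so it should be invoked explicitly.
\item The degenerate case $\ChoreMu_i(2,A\cup B)=0$ causes no trouble, because the inequality chain is stated multiplicatively and never divides by the benchmark.
\end{itemize}
If $\ChoreMu_i(2,A\cup B)=0$, then $x+y=0$, so $x=0$ and the claimed inequality $0\leq 0$ holds trivially.
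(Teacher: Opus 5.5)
Your proof is correct and is essentially the paper's argument: additivity on the disjoint union plus the bound $\ChoreMu_i(2,A\cup B)\geq (x+y)/2$ from Lemma~\ref{lem:basic-bound}, combined with the observation that $x\leq 2y$ is equivalent to $x\leq 2(x+y)/3$. The second statement is just the contrapositive, and your extra remark on the degenerate case $\ChoreMu_i(2,A\cup B)=0$ is correct but not needed.
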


\begin{proof}
Suppose $x\leq 2y$. Then $3x\leq 2(x+y)$, and hence $x\leq 2(x+y)/3$. By Lemma~\ref{lem:basic-bound} and additivity, $\ChoreMu_i(2,A\cup B)\geq (x+y)/2$. Therefore, $x\leq 2(x+y)/3\leq \frac43\ChoreMu_i(2,A\cup B)$. The second statement follows from the strict contrapositive.
\end{proof}
Note that the converse of Lemma~\ref{lem:2-load-bound} is not required: $x>2y$ does not necessarily imply a $\frac43$-PMMS violation, and we use this inequality only as the condition for performing an exchange. On the other hand, the second bound in Lemma~\ref{lem:basic-bound} implies that singleton bundles satisfy exact PMMS: if $A=\{g\}$, then for every bundle $B$ disjoint from $A$, we have $c_i(A)=c_i(g)\leq\ChoreMu_i(2,A\cup B)$. Thus, it remains only to construct an allocation in which every agent either receives a singleton or satisfies $c_i(X_i)\leq 2c_i(X_j)$ for every $j\neq i$.

\subsection{A pEF1 Starting Allocation}

We first consider instances with strictly positive costs that are non-degenerate in the sense of \citet[full version, Section~2]{conf/soda/Mahara26}. Specifically, for every cycle $C=(i_1,g_1,i_2,g_2,\ldots,i_k,g_k,i_{k+1})$ in the complete agent--chore bipartite graph, where $i_{k+1}=i_1$, we have $\prod_{\ell=1}^k c_{i_\ell}(g_\ell)\neq\prod_{\ell=1}^k c_{i_{\ell+1}}(g_\ell)$. The positivity and non-degeneracy assumptions will be removed at the end of the section.

We use the following consequence of \citet[full version, Lemmas~3.4 and~3.5]{conf/soda/Mahara26}.
\begin{theorem}
\label{thm:scale}
Let the instance have strictly positive costs and be non-degenerate. Then, there exist positive weights $(w_i)_{i\in N}$, an allocation $Y=(Y_i)_{i\in N}$, and a positive price vector $p=(p_g)_{g\in M}$ such that, under the rescaled costs $\widetilde c_i=w_i c_i$, we have $p_g\leq \widetilde c_i(g)$ for every $i\in N$ and $g\in M$, with equality whenever $g\in Y_i$. Moreover, $Y$ is pEF1 with respect to $p$, i.e., $\max_{i\in N}\hat p(Y_i)\leq\min_{j\in N}p(Y_j)$.
\end{theorem}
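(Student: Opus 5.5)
We will not reprove the market-based existence argument. Instead, we plan to read off the statement from the output of the algorithm of \citet{conf/soda/Mahara26}, expressed in the ``minimum pain-per-buck'' (MPB) language, and then normalize it by a per-agent rescaling.

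\textbf{Step 1 (invoking Mahara).} For positive, non-degenerate instances, \citet[full version, Lemmas~3.4 and~3.5]{conf/soda/Mahara26} provide an allocation $Y$ and a price vector $p$ with $p_g>0$ for all $g\in M$ satisfying two properties.
\begin{itemize}
\item \emph{MPB.} Every chore in $Y_i$ minimizes the ratio $c_i(g)/p_g$ over $g\in M$.
\item \emph{pEF1.} $\max_i \hat p(Y_i)\le \min_j p(Y_j)$.
\end{itemize}
Positivity of the prices should follow from strictly positive costs and the way prices are only ever raised or scaled in that algorithm. Non-degeneracy is the hypothesis under which Mahara's lemmas apply: it makes the MPB graph acyclic (a forest), which is what drives their termination and correctness argument. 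So we only need to check that our cycle condition matches theirs verbatim.

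\textbf{Step 2 (rescaling).} For each agent $i$, define
\[
\alpha_i:=\min_{g\in M} c_i(g)/p_g>0,
\]
and set $w_i:=1/\alpha_i$. Then $\widetilde c_i(g)=c_i(g)/\alpha_i\ge p_g$ for every $g$, with equality exactly on the MPB chores of $i$. In particular, equality holds for every $g\in Y_i$.

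\textbf{Step 3 (reading off pEF1).} The pEF1 inequality is stated purely in terms of $p$ and $Y$, so it is unaffected by the rescaling. It therefore transfers directly, using the equivalent formulation $\max_i\hat p(Y_i)\le\min_j p(Y_j)$ noted earlier in this section.

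\textbf{Main obstacle.} The only delicate point is Step 1: making sure the cited lemmas deliver simultaneously an MPB (equilibrium) allocation and pEF1 in the ``max $\hat p$ versus min $p$'' form, rather than a weaker per-pair form. If the per-pair form is what is stated, we would convert it using the equivalence recalled earlier. Concretely, $p(Y_i\setminus\{g\})\le p(Y_j)$ for some $g\in Y_i$ implies the same inequality for a maximum-price $g$, since removing a maximum-price chore leaves the smallest remainder. Taking the maximum over $i$ and the minimum over $j$ then gives the stated form. Empty bundles are handled by the convention $\hat p(\emptyset)=0$.
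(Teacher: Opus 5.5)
Your proposal is correct and matches the paper's treatment. The paper gives no independent proof either: it reads the statement off Mahara's Lemmas~3.4 and~3.5 (an MPB allocation with supporting pEF1 prices), rescales each $c_i$ by the reciprocal of agent $i$'s minimum pain-per-buck ratio so that $p_g\le\widetilde c_i(g)$ with equality on $Y_i$, and converts per-pair pEF1 into $\max_i\hat p(Y_i)\le\min_j p(Y_j)$ through the same maximum-price-chore equivalence you describe.
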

The rescaling does not affect PMMS, since multiplying $c_i$ by $w_i>0$ multiplies both $c_i(X_i)$ and $\ChoreMu_i(2,X_i\cup X_j)$ by the same factor. We therefore suppress the tildes and write $c_i$ for the rescaled costs. By Theorem~\ref{thm:scale}, $p(S)\leq c_i(S)$ for every $i\in N$ and $S\subseteq M$, with equality whenever $S\subseteq Y_i$.

Assume $n\geq2$ and $m>n$, and let $\tau:=\min_{j\in N}p(Y_j)$. Every bundle $Y_i$ is nonempty. Otherwise, $\tau=0$, and pEF1 would imply $\hat p(Y_i)=0$ for every $i\in N$. Since all prices are positive, each nonempty bundle would then contain at most one chore, contradicting $m>n$. Hence, $\tau>0$.
For each agent $i$, choose a maximum-price chore $h_i'\in Y_i$ and let $S_i:=Y_i\setminus\{h_i'\}$. Following the high--low decomposition used by \citet{conf/aaai/GargM26}, define $N_0:=\{i\in N:p_{h_i'}\leq\tau\}$ and $N_H:=\{i\in N:p_{h_i'}>\tau\}$. We call the agents in $N_0$ and $N_H$ low and high agents, respectively, and let $H:=\{h_i':i\in N_H\}$.
The following lemma collects the bounds needed in the subsequent transformation.
\begin{lemma}[Price bounds]
\label{lem:price-bounds}
For every $i\in N$, we have $c_i(S_i)=p(S_i)\leq\tau$. Moreover, $c_i(Y_i)\leq2\tau$ for every $i\in N_0$, $c_i(Y_j)\geq\tau$ for every $i,j\in N$, and $c_i(g)>\tau$ for every $i\in N$ and $g\in H$.
\end{lemma}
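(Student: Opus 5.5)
Each of the four claims follows directly from the supporting-price relations of Theorem~\ref{thm:scale}, the pEF1 inequality $\max_{i}\hat p(Y_i)\le\tau$, and the definition of $N_0$, $N_H$, and $H$. I would verify them in the order stated.

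\emph{First claim.} Since $S_i\subseteq Y_i$, the equality case of Theorem~\ref{thm:scale} gives $c_i(S_i)=p(S_i)$. Because $h_i'$ is a maximum-price chore of $Y_i$, we have $p(S_i)=p(Y_i)-p_{h_i'}=\hat p(Y_i)$, and pEF1 yields $\hat p(Y_i)\le\tau$.

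\emph{Second claim.} For $i\in N_0$, again use $Y_i\subseteq Y_i$ to get $c_i(Y_i)=p(Y_i)=p(S_i)+p_{h_i'}$. The first term is at most $\tau$ by the first claim, and the second is at most $\tau$ by the definition of $N_0$. Hence $c_i(Y_i)\le 2\tau$.

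\emph{Third claim.} For arbitrary $i,j$, the inequality $p_g\le c_i(g)$ holds for every chore, so additivity gives $c_i(Y_j)\ge p(Y_j)$. By the definition of $\tau$ as a minimum, $p(Y_j)\ge\tau$.

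\emph{Fourth claim.} Take $g\in H$. Then $g=h_k'$ for some $k\in N_H$, so $p_g>\tau$. The supporting-price inequality $c_i(g)\ge p_g$ then holds for every agent $i$, not only the owner, and so $c_i(g)>\tau$.

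There is no real obstacle here. The only point requiring care is to use the equality $p=c_i$ only on subsets of agent $i$'s own bundle $Y_i$, and to rely on the inequality $p\le c_i$ everywhere else. This is exactly how the statement distinguishes the own-bundle bounds, which are upper bounds, from the cross-agent bounds, which are lower bounds. Positivity of $\tau$, established just above, is not even needed for these inequalities.
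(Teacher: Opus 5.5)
Your proposal is correct and follows the paper's proof essentially line by line. You use equality of $p$ and $c_i$ on subsets of $Y_i$, together with pEF1 and the definition of $N_0$, for the two upper bounds. You use the supporting inequality $p_g\le c_i(g)$, with the definitions of $\tau$ and $H$, for the two lower bounds.
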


\begin{proof}
Since $S_i\subseteq Y_i$, we have $c_i(S_i)=p(S_i)$. By the choice of $h_i'$ and pEF1, $p(S_i)=\hat p(Y_i)\leq\tau$. If $i\in N_0$, then $c_i(Y_i)=p(Y_i)=p(S_i)+p_{h_i'}\leq2\tau$. For every $i,j\in N$, we have $c_i(Y_j)\geq p(Y_j)\geq\tau$. Finally, if $g\in H$, then $p_g>\tau$, and hence $c_i(g)\geq p_g>\tau$ for every $i\in N$.
\end{proof}

Thus, low agents have cost at most $2\tau$, while every allocated bundle costs at least $\tau$ to every agent. For high agents, the residual bundles $S_i$ have cost at most $\tau$, whereas every chore in $H$ costs more than $\tau$ to every agent. These bounds will be used in the reassignment and exchange phases.

\subsection{The Allocation Algorithm}

We now transform the pEF1 allocation $Y$ constructed in the previous subsection; see Algorithm~\ref{alg:main}. Fix an arbitrary order of the high agents and, for notational convenience, relabel them as $N_H=\{1,\ldots,r\}$, where $r=|N_H|$. We use the same order in both phases.

\paragraph{Phase 1: Reassign the high-price chores.}
For each high agent $i$, keep the residual bundle $S_i$ unchanged and reassign only the distinguished chores in $H$. Processing the high agents in the order $1,\ldots,r$, let each agent $i$ select a minimum-cost chore among those in $H$ that remain unassigned; denote her choice by $h_i$. Since each chore is selected exactly once, $(h_i)_{i\in N_H}$ is a permutation of $H$. We then assign $S_i\cup\{h_i\}$ to each high agent $i$, while every low agent keeps her original bundle $Y_i$. Moreover, for any $i<k$, the chore $h_k$ was still available when agent $i$ made her choice, and hence $c_i(h_i)\leq c_i(h_k)$.

\paragraph{Phase 2: Perform one pass of exchanges.}
Process the high agents in the same order as in Phase~1. When agent $i$ is processed, choose an agent $\ell\neq i$ whose current bundle minimizes $c_i(X_\ell)$. If $c_i(X_i)\leq 2c_i(X_\ell)$, leave the allocation unchanged. Otherwise, perform the simultaneous exchange
$(X_i,X_\ell)\leftarrow(S_i\cup X_\ell,\{h_i\})$.
Thus, agent $i$ keeps her residual bundle $S_i$ and receives her least-costly bundle among the other agents, while agent $\ell$ receives the single chore $h_i$.

The ordered reassignment in Phase~1 and the local exchange in Phase~2 adapt the reassignment-and-swap framework of \citet{conf/aaai/GargM26}; related chore-exchange operations also appear in \citet{conf/stoc/GargMQ25}. The threshold $c_i(X_i)>2c_i(X_\ell)$ is chosen according to Lemma~\ref{lem:2-load-bound}, so that the transformation can establish the pairwise cost bound needed for $\frac43$-PMMS. Note also that all ties are broken arbitrarily. Throughout Phase~2, the residual bundles $S_i$ and the selected chores $h_i$ remain fixed; only the current allocation $X$ changes.

\begin{algorithm}[!ht]
\caption{Construction of a $\frac43$-PMMS Allocation}
\label{alg:main}
\DontPrintSemicolon

\KwIn{The allocation $Y$ and the sets $(S_i)_{i\in N}$, $N_H$, $H$, and $N_0$ defined above}
\KwOut{An allocation $X$}

Fix an arbitrary order of $N_H$ and relabel its agents as $1,\ldots,r$\;

$R\gets H$\tcp*{Unassigned high-price chores}

\tcp{Phase 1: Reassign the high-price chores}
\For{$i=1,\ldots,r$}{
    Choose $h_i\in\arg\min_{g\in R} c_i(g)$\;
    $R\gets R\setminus\{h_i\}$\;
}

$X_i\gets Y_i$ for every $i\in N_0$\;

\For{$i=1,\ldots,r$}{
    $X_i\gets S_i\cup\{h_i\}$\;
}

\tcp{Phase 2: Perform one pass of exchanges}
\For{$i=1,\ldots,r$}{
    Choose $\ell\in\arg\min_{k\in N\setminus\{i\}} c_i(X_k)$\;
    \If{$c_i(X_i)>2c_i(X_\ell)$}{
        $B\gets X_\ell$\;
        $(X_i,X_\ell)\gets(S_i\cup B,\{h_i\})$\;
    }
}

\KwRet{$X$}\;
\end{algorithm}

\subsection{Correctness of the Algorithm}

We now prove that Algorithm~\ref{alg:main} returns an allocation $X$ in which every agent $i$ either receives a singleton or satisfies $c_i(X_i)\leq 2c_i(X_j)$ for every $j\neq i$. By Lemma~\ref{lem:2-load-bound} and the second bound in Lemma~\ref{lem:basic-bound}, this is sufficient to guarantee $\frac43$-PMMS.
Let $X^0$ denote the allocation after Phase~1, and let $X^t$ denote the allocation after high agent $t$ is processed in Phase~2. For every $i\in N_H$, define $s_i:=c_i(S_i)$ and $a_i:=c_i(h_i)$. Since $h_i\in H$, Lemma~\ref{lem:price-bounds} gives $s_i\leq\tau<a_i$. In particular, $(s_i+a_i)/2<a_i$.

We first record the ordering property established in Phase~1.

\begin{lemma}[Ordered choices]
\label{lem:ordered-choices}
For any $i,k\in N_H$ with $i<k$, we have $c_i(h_i)\leq c_i(h_k)$.
\end{lemma}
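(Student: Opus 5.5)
The claim follows directly from the greedy structure of Phase~1 in Algorithm~\ref{alg:main}. I will show it by tracking the set $R$ of unassigned high-price chores and proving that $h_k$ is still in $R$ at the moment agent $i$ makes her choice.

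Fix high agents $i<k$. At the start of Phase~1 we have $R=H$. In round $t$, exactly one chore, $h_t$, is removed from $R$.

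\emph{Distinctness.} Each $h_t$ is chosen from the current $R$ and then deleted, so the chores $h_1,\ldots,h_r$ are pairwise distinct. The set $H=\{h_i':i\in N_H\}$ has exactly $r$ elements, because the chosen chores $h'_i\in Y_i$ belong to disjoint bundles. Hence $R$ is never empty when a choice is made, and $(h_t)_{t\in N_H}$ is a permutation of $H$.

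\emph{Availability.} By induction on $t$, just before round $t$ we have $R=H\setminus\{h_1,\ldots,h_{t-1}\}$. Take $t=i$. Since $k>i$, we have $h_k\in H$, and $h_k\ne h_1,\ldots,h_{i-1}$ by distinctness. Therefore $h_k\in R$ when agent $i$ chooses.

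\emph{Minimality.} Agent $i$ picks $h_i\in\arg\min_{g\in R}c_i(g)$. Since $h_k\in R$ at that time, this gives $c_i(h_i)\le c_i(h_k)$.

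Arbitrary tie-breaking in the $\arg\min$ has no effect, because the argument uses only minimality of $h_i$ over $R$. The rescaled costs are fixed before Phase~1, so the inequality holds for the costs $c_i$ used throughout the algorithm. The only point needing any care is distinctness, which is the induction above; the rest is bookkeeping.
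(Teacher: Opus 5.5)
Your proof is correct and follows the paper's argument: $h_k$ is still in the pool $R$ when agent $i$ makes her choice, so minimality of $h_i$ over $R$ gives $c_i(h_i)\le c_i(h_k)$. The paper states this in two lines. Your distinctness bookkeeping is valid but not needed, since $h_k$ is drawn from the round-$k$ pool, which is contained in the round-$i$ pool because $R$ only shrinks.
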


\begin{proof}
The chore $h_k$ was still available when agent $i$ selected $h_i$. Since $i$ chose a minimum-cost remaining chore, $c_i(h_i)\leq c_i(h_k)$.
\end{proof}

The next lemma captures the main invariant of Phase~2. In particular, it guarantees that an unprocessed high agent still holds her Phase~1 bundle when her turn arrives. The ordering argument follows the same idea used to protect unprocessed agents in the reassignment framework of \citet{conf/aaai/GargM26}.

\begin{lemma}[Phase~2 invariant]
\label{lem:phase2-invariant}
For every $t\in\{0,\ldots,r\}$, the following properties hold after turn $t$:
\begin{enumerate}
    \item $X^t$ is a complete allocation and $p(X_j^t)\geq\tau$ for every $j\in N$;
    \item every unprocessed high agent $k>t$ holds $X_k^t=S_k\cup\{h_k\}$;
    \item every agent that has been chosen as $\ell$ in an exchange by time $t$ holds a singleton.
\end{enumerate}
Consequently, $c_i(X_j^t)\geq\tau$ for every $i,j\in N$.
\end{lemma}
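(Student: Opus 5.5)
We prove the three properties simultaneously by induction on $t$, and then derive the final consequence from property~1 together with the supporting-price inequality $p(S)\le c_i(S)$ of Theorem~\ref{thm:scale}. That last step is immediate: $c_i(X_j^t)\ge p(X_j^t)\ge\tau$.

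\emph{Base case $t=0$.} The allocation $X^0$ is complete because $(h_i)_{i\in N_H}$ is a permutation of $H$. Every chore of $Y$ therefore ends up in exactly one bundle: low agents keep $Y_i$, and each high agent holds $S_i\cup\{h_i\}$. For prices, a low agent has $p(X_j^0)=p(Y_j)\ge\tau$. A high agent $j$ has $p(X_j^0)\ge p_{h_j}>\tau$, because $h_j\in H$. Property~2 holds by construction, and property~3 is vacuous.

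\emph{Inductive step.} Assume the claim after turn $t-1$ and consider the processing of high agent $t$. By property~2 (with $k=t>t-1$), agent $t$ currently holds $S_t\cup\{h_t\}$. If no exchange occurs, $X^t=X^{t-1}$, and all properties carry over; property~2 now only concerns $k>t$.

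If an exchange occurs with partner $\ell$, we argue in three parts.
\begin{itemize}
\item The new pair of bundles $(S_t\cup X_\ell^{t-1},\{h_t\})$ uses exactly the chores of $X_t^{t-1}\cup X_\ell^{t-1}$, so completeness is preserved.
\item For prices, $p(S_t\cup X_\ell^{t-1})\ge p(X_\ell^{t-1})\ge\tau$ by induction, and $p(\{h_t\})=p_{h_t}>\tau$ since $h_t\in H$. Property~1 therefore holds.
\item Property~3 holds because $\ell$ now holds the singleton $\{h_t\}$. Any earlier partner $\ell'\ne\ell$ is untouched, and $\ell'\ne t$ as argued below. An earlier partner that is re-chosen as $\ell$ again ends as a singleton.
\end{itemize}

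The main obstacle is property~2: we must show that the chosen $\ell$ is not an unprocessed high agent $k>t$, and likewise that $t$ itself was never an earlier partner. The second part follows from the first, since any earlier partner would have been an unprocessed agent $>$ its processor at that time.

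So suppose $\ell=k>t$. Then $X_k^{t-1}=S_k\cup\{h_k\}$, hence $c_t(X_\ell^{t-1})\ge c_t(h_k)\ge c_t(h_t)=a_t$ by Lemma~\ref{lem:ordered-choices}. On the other hand, an exchange requires
\[
s_t+a_t=c_t(X_t^{t-1})>2c_t(X_\ell^{t-1})\ge 2a_t,
\]
that is, $s_t>a_t$. This contradicts $s_t\le\tau<a_t$ from Lemma~\ref{lem:price-bounds}. Hence every exchange partner is either a low agent or an already processed agent, and the bundles of all agents $k>t$ remain unchanged.
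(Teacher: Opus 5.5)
Your proof is correct and follows essentially the same induction as the paper's. It includes the key step that an exchange partner cannot be an unprocessed high agent, which uses $c_t(h_k)\ge c_t(h_t)=a_t$ together with $s_t\le\tau<a_t$; your contradiction $s_t>a_t$ is just a rearrangement of the paper's comparison $a_t>(s_t+a_t)/2$.
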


\begin{proof}
We proceed by induction on $t$. For $t=0$, Phase~1 only permutes the chores in $H$ among the high agents, while every residual bundle $S_i$ remains with its original owner. Hence, $X^0$ is a complete allocation. If $j\in N_0$, then $X_j^0=Y_j$ and $p(X_j^0)=p(Y_j)\geq\tau$. If $j\in N_H$, then $X_j^0=S_j\cup\{h_j\}$ with $h_j\in H$, and therefore $p(X_j^0)\geq p_{h_j}>\tau$. The second property follows from the construction of Phase~1, and the third holds trivially.

Suppose the claim holds after turn $t-1$. By the second property, agent $t$ holds $S_t\cup\{h_t\}$ when her turn begins, and hence $c_t(X_t^{t-1})=s_t+a_t$. If no exchange occurs, the allocation remains unchanged. Suppose instead that agent $t$ exchanges with some agent $\ell$.

We first show that $\ell$ cannot be an unprocessed high agent. For every $k>t$, the induction hypothesis gives $X_k^{t-1}=S_k\cup\{h_k\}$. Hence,
$c_t(X_k^{t-1})\geq c_t(h_k)\geq c_t(h_t)=a_t>(s_t+a_t)/2$,
where the second inequality follows from Lemma~\ref{lem:ordered-choices}. On the other hand, since an exchange occurs at turn $t$, the selected agent $\ell$ satisfies $c_t(X_\ell^{t-1})<(s_t+a_t)/2$. Therefore, $\ell$ cannot be an unprocessed high agent.

Let $B=X_\ell^{t-1}$. The exchange replaces the two bundles $S_t\cup\{h_t\}$ and $B$ by $S_t\cup B$ and $\{h_t\}$. Since the two original bundles are disjoint, the two new bundles are also disjoint and have the same union. Thus, completeness is preserved. Moreover,
$p(S_t\cup B)=p(S_t)+p(B)\geq p(B)\geq\tau$,
while $p(\{h_t\})=p_{h_t}>\tau$. Hence, the first property remains valid. Since no unprocessed high agent can be chosen as $\ell$, the second property is also preserved.

It remains to verify the third property. Agent $\ell$ receives the singleton $\{h_t\}$. If $\ell\in N_H$, then her turn has already occurred, since an unprocessed high agent cannot be chosen as $\ell$. Moreover, agent $t$ could not have been chosen as $\ell$ earlier, for the same reason. Hence, every agent chosen as $\ell$ before turn $t$ either keeps her singleton or, if chosen again at turn $t$, receives another singleton. Therefore, every agent that has been chosen as $\ell$ by time $t$ holds a singleton.

The final claim follows from the supporting-price inequality: $c_i(X_j^t)\geq p(X_j^t)\geq\tau$ for all $i,j\in N$.
\end{proof}

The Phase~2 invariant describes what happens to unprocessed high agents. We also need to show that, once a processed high agent $i$ has established a lower bound on $c_i(X_j)$ for every $j\neq i$, later exchanges preserve this bound as long as agent $i$ is never chosen as $\ell$. The next lemma formalizes this observation.

\begin{lemma}[Persistence of lower bounds]
\label{lem:persistent-lower-bound}
Fix $i\in N_H$ and let $b\leq a_i$. Suppose that immediately after agent $i$ is processed, $c_i(X_j^i)\geq b$ for every $j\neq i$. If agent $i$ is never chosen as $\ell$ in a later exchange, then she keeps $X_i^i$, and in the final allocation, $c_i(X_j)\geq b$ for every $j\neq i$.
\end{lemma}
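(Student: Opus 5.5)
We argue by induction over the later turns $t=i+1,\ldots,r$, maintaining two claims after each turn $t\ge i$: (a) agent $i$ holds $X_i^t=X_i^i$, and (b) $c_i(X_j^t)\ge b$ for every $j\ne i$. The base case $t=i$ is exactly the hypothesis. The plan is to analyze a single turn $t>i$ and show that both claims survive it.

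Fix a turn $t>i$. If agent $t$ performs no exchange, nothing changes and the claims persist trivially. Suppose agent $t$ exchanges with some $\ell$. Then exactly two bundles change: agent $t$'s bundle becomes $S_t\cup X_\ell^{t-1}$ and agent $\ell$'s bundle becomes $\{h_t\}$. By assumption $\ell\ne i$. Also $t\ne i$, since $t>i$. Hence agent $i$'s bundle is untouched, which gives (a).

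For (b), we only need to check the two modified bundles.

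\emph{The bundle of agent $t$.} The new bundle $S_t\cup X_\ell^{t-1}$ contains $X_\ell^{t-1}$. Since $\ell\ne i$, the inductive hypothesis gives $c_i(X_\ell^{t-1})\ge b$. Costs are nonnegative and additive, so the superset also costs at least $b$ to agent $i$.

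\emph{The bundle of agent $\ell$.} Agent $\ell$ now holds $\{h_t\}$, and we need $c_i(h_t)\ge b$. This is the only place where the hypothesis $b\le a_i$ and the ordering are used. Since $i<t$ are both high agents, Lemma~\ref{lem:ordered-choices} gives $c_i(h_t)\ge c_i(h_i)=a_i\ge b$.

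All other bundles are unchanged, so (b) holds after turn $t$. This completes the induction, and evaluating at $t=r$ gives the final-allocation statement.

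I expect no real obstacle here. The one point needing care is the singleton case, where we must invoke Lemma~\ref{lem:ordered-choices} with the correct orientation ($i<t$) together with $b\le a_i$. We should also note that agent $t$'s new bundle is a superset of a bundle that was already bounded by the inductive hypothesis, which relies on nonnegativity of costs.
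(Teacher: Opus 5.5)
Your proposal is correct and follows essentially the same argument as the paper: induction over the later turns, observing that only the two exchanged bundles change, bounding $S_t\cup X_\ell^{t-1}$ via the inductive hypothesis on $X_\ell^{t-1}$ (using $\ell\neq i$), and bounding the new singleton $\{h_t\}$ via Lemma~\ref{lem:ordered-choices} together with $b\leq a_i$. Your explicit invariant (a) is a slightly cleaner way of recording that agent $i$ keeps $X_i^i$, which the paper notes at the end instead.
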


\begin{proof}
We prove the claim by induction over the turns following agent $i$. Immediately after agent $i$ is processed, the lower bound $c_i(X_j^i)\geq b$ holds for every $j\neq i$ by assumption.
Consider a later turn $k>i$, and suppose that immediately before this turn, $c_i(X_j)\geq b$ for every $j\neq i$. If no exchange occurs, the allocation does not change. Otherwise, let $\ell$ be the agent chosen for the exchange and let $B=X_\ell$ denote her bundle before the exchange. Since agent $i$ is never chosen as $\ell$, we have $\ell\neq i$, and hence $c_i(B)\geq b$ by the induction hypothesis.
After the exchange, the two affected bundles are $S_k\cup B$ and $\{h_k\}$. We have
$c_i(S_k\cup B)\geq c_i(B)\geq b$
and
$c_i(h_k)\geq c_i(h_i)=a_i\geq b$,
where the first inequality in the second relation follows from Lemma~\ref{lem:ordered-choices}. All other bundles remain unchanged. Hence, the lower bound is preserved after turn $k$.
By induction, $c_i(X_j)\geq b$ for every $j\neq i$ in the final allocation. Since agent $i$ has already been processed and is never chosen as $\ell$ later, her own bundle remains $X_i^i$.
\end{proof}

We can now establish the desired structural property.

\begin{proposition}
\label{prop:generic-pmms}
Algorithm~\ref{alg:main} returns a complete allocation $X$ such that every agent $i$ either receives a singleton or satisfies $c_i(X_i)\leq2c_i(X_j)$ for every $j\neq i$. Consequently, $X$ is $\frac43$-PMMS.
\end{proposition}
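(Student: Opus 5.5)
We classify agents by their role in the final allocation and, in each case, show that the agent either holds a singleton or satisfies $c_i(X_i)\le 2c_i(X_j)$ for all $j\ne i$. Completeness of $X$ follows from Lemma~\ref{lem:phase2-invariant} at $t=r$. Once the structural property is established, the $\frac43$-PMMS guarantee follows directly. For singletons we use the second bound of Lemma~\ref{lem:basic-bound}; for all other agents we apply Lemma~\ref{lem:2-load-bound} with $A=X_i$ and $B=X_j$.

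\emph{Agents chosen as $\ell$.} Any agent ever chosen as $\ell$ holds a singleton at the end, by property~3 of Lemma~\ref{lem:phase2-invariant} at $t=r$.

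\emph{Low agents never chosen as $\ell$.} Such an agent keeps $Y_i$ throughout. By Lemma~\ref{lem:price-bounds}, $c_i(Y_i)\le 2\tau$. By the final claim of Lemma~\ref{lem:phase2-invariant}, $c_i(X_j)\ge\tau$ for every $j$. Hence $c_i(X_i)\le 2c_i(X_j)$.

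\emph{High agents $i$ never chosen as $\ell$.} We consider the state right after turn $i$ and aim to invoke Lemma~\ref{lem:persistent-lower-bound} with $b=c_i(X_i^i)/2$. This requires $b\le a_i$ and $c_i(X_j^i)\ge b$ for all $j\ne i$.

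If no exchange occurred at turn $i$, then $X_i^i=S_i\cup\{h_i\}$ and $c_i(X_\ell)\ge (s_i+a_i)/2=b$ by the minimality of $\ell$. Moreover, $b<a_i$ because $s_i<a_i$.

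If an exchange occurred, then $X_i^i=S_i\cup B$ with $c_i(B)<(s_i+a_i)/2$. Here we need a lower bound $c_i(X_j^i)\ge (s_i+c_i(B))/2$ for every $j\ne i$. For the agent $\ell$, who now holds $\{h_i\}$, this holds because $a_i>(s_i+a_i)/2>(s_i+c_i(B))/2$. For every other $j$, it holds because $c_i(X_j^i)=c_i(X_j^{i-1})\ge c_i(B)$ by minimality of $\ell$, and $c_i(B)\ge\tau\ge s_i$ by Lemma~\ref{lem:phase2-invariant} and Lemma~\ref{lem:price-bounds}, which gives $c_i(B)\ge (s_i+c_i(B))/2$. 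Finally, $b=(s_i+c_i(B))/2<a_i$.

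In both cases, Lemma~\ref{lem:persistent-lower-bound} yields that $X_i=X_i^i$ and $c_i(X_j)\ge c_i(X_i)/2$ for all $j\ne i$ in the final allocation.

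The main obstacle is the case analysis for high agents who exchanged. It requires combining the threshold $\tau$ (to compare $s_i$ with $c_i(B)$) with the ordering from Lemma~\ref{lem:ordered-choices}, which is encoded in Lemma~\ref{lem:persistent-lower-bound}. This ensures that later singletons $\{h_k\}$ cost agent $i$ at least $a_i\ge b$. A careful write-up should also confirm that "never chosen as $\ell$" covers every high agent that does not end with a singleton, so that the three cases above are exhaustive.
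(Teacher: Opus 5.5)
Your proof is correct and follows essentially the same route as the paper: the same case split (agents chosen as $\ell$ end with singletons, low agents use $c_i(Y_i)\le 2\tau\le 2c_i(X_j)$, high agents never chosen as $\ell$ use Lemma~\ref{lem:persistent-lower-bound}), followed by Lemmas~\ref{lem:basic-bound} and~\ref{lem:2-load-bound}. The only difference is cosmetic: in the exchange case the paper applies the persistence lemma with the stronger bound $b=c_i(B)$, which is valid since $c_i(B)\ge\tau\ge s_i$ and $a_i>c_i(B)$, whereas you take $b=c_i(X_i^i)/2=(s_i+c_i(B))/2$ in both subcases, which is exactly the bound needed.
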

\begin{proof}
Completeness follows from Lemma~\ref{lem:phase2-invariant}. Fix an agent $i$. If $i$ is ever chosen as $\ell$ in an exchange, then property~(3) of Lemma~\ref{lem:phase2-invariant} implies that her final bundle $X_i$ is a singleton. By the second bound in Lemma~\ref{lem:basic-bound}, she then satisfies exact PMMS. We may therefore assume that $i$ is never chosen as $\ell$.

\medskip
\noindent\textbf{Case 1: $i\in N_0$.}
Agent $i$ is never processed in Phase~2, so $X_i=Y_i$. By Lemma~\ref{lem:price-bounds}, $c_i(X_i)\leq 2\tau$, while Lemma~\ref{lem:phase2-invariant} gives $c_i(X_j)\geq\tau$ for every $j\neq i$. Hence, $c_i(X_i)\leq 2c_i(X_j)$ for every $j\neq i$.

\medskip
\noindent\textbf{Case 2: $i\in N_H$ and no exchange occurs at her turn.}
When agent $i$ is processed, she holds $S_i\cup\{h_i\}$, whose cost is $s_i+a_i$. Since no exchange occurs, every other current bundle has cost at least $(s_i+a_i)/2$ according to $c_i$. Moreover, $(s_i+a_i)/2<a_i$ because $s_i<a_i$. Applying Lemma~\ref{lem:persistent-lower-bound} with $b=(s_i+a_i)/2$ shows that, in the final allocation, $c_i(X_j)\geq (s_i+a_i)/2$ for every $j\neq i$. Since agent $i$ is never chosen as $\ell$, her own bundle remains unchanged. Therefore, $c_i(X_i)=s_i+a_i\leq 2c_i(X_j)$ for every $j\neq i$.

\medskip
\noindent\textbf{Case 3: $i\in N_H$ and an exchange occurs at her turn.}
Let $B$ be the bundle chosen by agent $i$ for the exchange, and let $y:=c_i(B)$. By Lemma~\ref{lem:phase2-invariant}, $y\geq\tau$, while Lemma~\ref{lem:price-bounds} gives $s_i\leq\tau$. Hence, $s_i\leq y$. Since the exchange condition gives $2y<s_i+a_i$, we obtain $a_i>2y-s_i\geq y$.
After the exchange, agent $i$ receives $S_i\cup B$, whose cost is $s_i+y\leq 2y$. Since $B$ was a least-costly bundle among the other agents, every unchanged bundle held by another agent has cost at least $y$ according to $c_i$, while the new singleton $\{h_i\}$ has cost $a_i>y$. Applying Lemma~\ref{lem:persistent-lower-bound} with $b=y$ shows that, in the final allocation, $c_i(X_j)\geq y$ for every $j\neq i$. Since agent $i$ is never chosen as $\ell$, she keeps $S_i\cup B$. Thus, $c_i(X_i)\leq 2c_i(X_j)$ for every $j\neq i$.

Therefore, every agent either receives a singleton or satisfies $c_i(X_i)\leq 2c_i(X_j)$ for every $j\neq i$. Singleton bundles satisfy exact PMMS by Lemma~\ref{lem:basic-bound}, while Lemma~\ref{lem:2-load-bound} gives the $\frac43$-PMMS guarantee for all remaining agents.
\end{proof}

\subsection{Extension to General Instances}

We now remove the positivity and non-degeneracy assumptions and complete the proof of Theorem~\ref{thm:chores-existence}.

\begin{proof}[Proof of Theorem~\ref{thm:chores-existence}]
Consider an arbitrary additive-chore instance. If $n=1$, there are no pairwise PMMS constraints, so the claim is immediate. Hence, assume $n\geq2$.

We first handle chores that have zero cost to some agent, following \citet[Section~3]{conf/soda/Mahara26}. Let $Z$ be the set of such chores, and let $M^+:=M\setminus Z$. For each $g\in Z$, choose an agent $i$ with $c_i(g)=0$, remove $g$ for now, and assign it to agent $i$ at the end.
It suffices to find a $\frac43$-PMMS allocation of $M^+$. We first observe that $\ChoreMu_i(2,U)\leq\ChoreMu_i(2,V)$ whenever $U\subseteq V$. Indeed, take a partition $(P_1,P_2)\in\PiTwo(V)$ attaining $\ChoreMu_i(2,V)$ and remove the chores in $V\setminus U$ from the two parts. The resulting bundles form a partition of $U$, and neither part becomes more costly. Now consider adding a removed chore $g$ back to its chosen agent $i$. Since $c_i(g)=0$, the cost of agent $i$'s own bundle does not change, while all other agents' bundle costs remain unchanged. Moreover, every pairwise union involving $X_i$ only gains the chore $g$, so its two-way maximin-share benchmark cannot decrease. Thus, adding $g$ back preserves all PMMS inequalities. Repeating this argument for every $g\in Z$, it suffices to allocate the chores in $M^+$.

If $|M^+|\leq n$, assign each chore in $M^+$ to a distinct agent. Every nonempty bundle is then a singleton and satisfies exact PMMS by the second bound in Lemma~\ref{lem:basic-bound}, while empty bundles satisfy exact PMMS trivially. Adding back the chores in $Z$ then proves the claim. We may therefore assume that $|M^+|>n$.

By the definition of $M^+$, every chore in $M^+$ has strictly positive cost to every agent. We next remove the non-degeneracy assumption using the perturbation of \citet[Lemma~2.6]{conf/soda/Mahara26}. Assign a distinct prime $q_{ig}$ to every pair $(i,g)\in N\times M^+$ and, for $\varepsilon>0$, define $c_i^{(\varepsilon)}(g):=c_i(g)q_{ig}^{\varepsilon}$. By the cited lemma, for all sufficiently small $\varepsilon>0$, the perturbed instance is non-degenerate. Moreover, $c_i^{(\varepsilon)}(g)\to c_i(g)$ as $\varepsilon\to0$ for every $i\in N$ and $g\in M^+$.

Choose a sequence $\varepsilon_t>0$ converging to $0$ such that each perturbed instance is non-degenerate. For every $t$, Theorem~\ref{thm:scale}, Algorithm~\ref{alg:main}, and Proposition~\ref{prop:generic-pmms} give a $\frac43$-PMMS allocation $X^{(t)}$ of $M^+$ under the rescaled perturbed costs. By scale invariance, $X^{(t)}$ satisfies the same guarantee under $c^{(\varepsilon_t)}$.
Since there are only finitely many allocations of $M^+$, some allocation $X^*$ occurs along an infinite subsequence of $(X^{(t)})_{t\geq1}$. Passing to this subsequence, we may assume that $X^{(t)}=X^*$ for every
$t$. Hence, for every $i\neq j$,
\[
c_i^{(\varepsilon_t)}(X_i^*)
\leq
\frac43
\min_{(P_1,P_2)\in\PiTwo(X_i^*\cup X_j^*)}
\max\{c_i^{(\varepsilon_t)}(P_1),
      c_i^{(\varepsilon_t)}(P_2)\}.
\]

Since $M^+$ is finite, for every fixed bundle $S\subseteq M^+$, $c_i^{(\varepsilon_t)}(S)\to c_i(S)$ as $t\to\infty$. Moreover, $\PiTwo(X_i^*\cup X_j^*)$ is finite. Hence, for every fixed partition $(P_1,P_2)\in\PiTwo(X_i^*\cup X_j^*)$,
\[
\max\{c_i^{(\varepsilon_t)}(P_1),
      c_i^{(\varepsilon_t)}(P_2)\}
\to
\max\{c_i(P_1),c_i(P_2)\}.
\]
Therefore,
\[
\min_{(P_1,P_2)\in\PiTwo(X_i^*\cup X_j^*)}
\max\{c_i^{(\varepsilon_t)}(P_1),
      c_i^{(\varepsilon_t)}(P_2)\}
\to
\ChoreMu_i(2,X_i^*\cup X_j^*).
\]
Taking limits in the PMMS inequality above gives
\[
c_i(X_i^*)
\leq
\frac43\ChoreMu_i(2,X_i^*\cup X_j^*)
\]
for every $i\neq j$.
Thus, $X^*$ is a $\frac43$-PMMS allocation of $M^+$. Finally, assign each chore in $Z$ to the agent chosen for it above. As shown above, this preserves all PMMS inequalities. Hence, the resulting allocation is $\frac43$-PMMS for the original instance.
\end{proof}

\section{Conclusion and Open Problems}
We give a polynomial-time reduction from additive chores to additive goods that preserves PMMS existence. This transfers the known nonexistence result for chores to goods. Separate constructions in Appendix~\ref{append:impossibility_ratio} give optimal instance factors of $226/227$ for goods and $1.102065$ for chores. For the positive result, we prove that every additive-chore instance admits a $4/3$-PMMS allocation through a reassignment and exchange procedure starting from a price-supported pEF1 allocation.

Determining the optimal universal approximation guarantees remains open. Taking into account the stronger goods impossibility bound of \citet{golz2026pmms}, the bounds discussed in this paper are
\[
0.7808 \leq \alphaPMMS \leq \frac{78}{79},
\qquad
1.102065 \leq \rhoPMMS \leq \frac43.
\]
Closing these gaps is a natural direction for future work. It would also be interesting to replace the exhaustive verification of the quantitative counterexamples with structural proofs and to construct chore instances with stronger lower bounds.

\section*{Declaration of the Use of AI Tools}
The two quantitative counterexample constructions in Appendix~\ref{append:impossibility_ratio} and their finite certificates were derived with the assistance of OpenAI's GPT-5.6 Sol, which also assisted in developing the $4/3$-PMMS approximation proof and in drafting and typesetting the manuscript. All exhaustive searches underlying the results were independently reproduced using exact arithmetic during the preparation of the manuscript.

The authors have independently verified all mathematical details, simplified and refined the arguments and exposition, and retain full responsibility for the content of the manuscript and any errors therein.
% The authors are responsible for the correctness of all results in this paper. The manuscript has not been peer reviewed.

\newpage
\bibliographystyle{plainnat}
\bibliography{reference}

\appendix
\section{Gapped Impossibility Results}
\label{append:impossibility_ratio}
In this section, we extend the non-existence results in Sect.~\ref{sec:mainImpossibility} to their gapped versions.
For additive goods, we show that there exists an instance where the approximation ratio of PMMS is at most $\frac{226}{227}$.
For additive chores, we show that there exists an instance where the approximation ratio is at least $\frac{220413}{200000}=1.102065$.

\subsection{Impossibility for Additive Goods}
We consider throughout this section the following fair division instance with $n=3$ agents and $m = 9$ goods.

\begin{table}[H]
  \centering
  \small
  \setlength{\tabcolsep}{5.2pt}
  \begin{tabular}{c*{9}{r}}
    \toprule
      & $g_1$ & $g_2$ & $g_3$ & $g_4$ & $g_5$ & $g_6$ & $g_7$ & $g_8$ & $g_9$ \\
    \midrule
    $v_1$ & 80 & 63 & 73 & 77 & 81 & 66 & 72 & 60 & 83 \\
    $v_2$ & 86 & 60 & 74 & 76 & 78 & 75 & 73 & 66 & 80 \\
    $v_3$ & 85 & 60 & 63 & 75 & 80 & 72 & 62 & 61 & 81 \\
    \bottomrule
  \end{tabular}
  \caption{Singleton values in the three-agent, nine-good instance.}
  \label{tab:goods-values}
\end{table}

\begin{theorem}\label{thm:goods-ratio}
There is an instance with three agents and nine goods in which every valuation is strictly positive and additive, but no exact PMMS allocation exists. In addition, the optimum PMMS ratio of the instance is exactly $226/227$.
Consequently,
\[
  \alphaPMMS \le \frac{226}{227}<1.
\]
\end{theorem}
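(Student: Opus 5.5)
The plan is to certify the statement by exact finite enumeration, organized so that both directions of ``the optimum ratio is exactly $226/227$'' are checkable by hand-verifiable certificates. There are $3^9=19683$ allocations of the nine goods in Table~\ref{tab:goods-values}. For each allocation $X$ and each ordered pair $i\ne j$, I compute $\GoodsMu_i(2,X_i\cup X_j)$ exactly, by enumerating the at most $2^9$ subsets $P\subseteq X_i\cup X_j$ and maximizing $\min\{v_i(P),v_i(X_i\cup X_j\setminus P)\}$ in integer arithmetic. All values are positive integers, so every benchmark is positive whenever $X_i\cup X_j\neq\emptyset$, and $\alpha(X)$ is a minimum of rationals with integer numerator and denominator.

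\textbf{Lower bound.} I exhibit one explicit allocation $X^\star$ and list the six pairwise benchmarks together with witnessing bipartitions. Since $\GoodsMu_i$ is a maximum, each witness only gives a lower bound, so I also need an upper certificate for $\GoodsMu_i$. For this I use $\GoodsMu_i(2,S)\le \lfloor v_i(S)/2\rfloor$. Where that is not tight, I use a short subset-sum argument showing that no subset has value in the window $[\mu+1,\,v_i(S)-\mu-1]$. This establishes $\alpha(X^\star)=226/227$, with the binding pair having $v_i(X^\star_i)=226$ and benchmark $227$.

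\textbf{Upper bound.} I show that every other allocation $X$ has some pair $(i,j)$ with $v_i(X_i)/\GoodsMu_i(2,X_i\cup X_j)\le 226/227$. For this pair, a single bipartition $(P_1,P_2)$ of $X_i\cup X_j$ with $\min\{v_i(P_1),v_i(P_2)\}\ge \tfrac{227}{226}v_i(X_i)$ suffices as a certificate, because it lower-bounds $\GoodsMu_i$. The certificate for the whole direction is therefore a table mapping each allocation to one violating triple $(i,j,P_1)$, and each entry is checkable in constant time. I would shrink the table using obvious reductions. If some bundle is empty, or if $|X_i|\le 1$ while $|X_j|\ge 3$, the violation is immediate by the cardinality arguments of Lemma~\ref{lem:reduction_chores_goods}, since values lie in $[60,86]$. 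This leaves essentially the allocations with bundle sizes $(3,3,3)$, $(2,3,4)$, and a few others to be checked explicitly.

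Strict positivity and additivity hold by inspection, nonexistence of exact PMMS follows from $226/227<1$, and the bound $\alphaPMMS\le 226/227$ then follows from the definition of $\alphaPMMS$ as an infimum. The main obstacle is the upper bound: there is no structural reason for this particular ratio, so the argument is inherently computational. The key concern is making the certificate sound. All comparisons must be done over the integers, for instance by testing $227\,v_i(X_i)\le 226\,\GoodsMu_i$, to avoid floating-point error, and the enumeration must be reproduced independently, as the paper's verification scripts do.
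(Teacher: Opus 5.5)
Your plan is correct and is essentially the paper's argument: an explicit witness allocation with exactly certified benchmarks, together with an exact exhaustive search that records one violating bipartition per allocation. The paper's witness is $X_1=\{g_2,g_3,g_9\}$, $X_2=\{g_5,g_6,g_7\}$, $X_3=\{g_1,g_4,g_8\}$, whose binding pair $2\to 3$ has $v_2(X_2)=226$ against a benchmark of $227=454/2$. The main difference is that the paper's cardinality pruning is sharper. Since $3\cdot 60=180>172=2\cdot 86$, an agent holding $a\le 2$ goods against one holding at least $a+2$ already has ratio at most $43/45$. So the profiles $(2,3,4)$ and $(2,2,5)$ that you leave for explicit checking are also eliminated analytically. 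Only the $1680$ balanced allocations remain, and each needs only its ten $3$-versus-$3$ splits. One harmless slip: a benchmark is positive only when $|X_i\cup X_j|\ge 2$, not whenever the union is nonempty.
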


To prove the theorem, we first separate allocations according to their
bundle cardinalities. This reduction is particularly useful because all
singleton values in Table~\ref{tab:goods-values} lie in the narrow
interval $[60,86]$. We begin by ruling out every unbalanced allocation.

\begin{lemma}\label{lem:unbalanced}
Every allocation whose bundle cardinalities are not $(3,3,3)$ has PMMS factor strictly below $226/227$.
\end{lemma}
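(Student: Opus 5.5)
The plan is to prove the lemma by a direct cardinality argument that compares the smallest bundle with the largest one. It uses only the fact that every singleton value in Table~\ref{tab:goods-values} lies in $[60,86]$.

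Suppose the cardinalities are not $(3,3,3)$. Since $|M|=9$, some agent $i$ receives $r\le 2$ goods and some other agent $j$ receives $s\ge r+2$ goods.

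For the numerator, we have $v_i(X_i)\le 86r$, and $v_i(X_i)\le 172$ when $r=2$. A finer estimate uses agent $i$'s own top values: the two largest entries are $83+81=164$ for agent~1, $86+80=166$ for agent~2, and $85+81=166$ for agent~3. So $v_i(X_i)\le 166$ for $r=2$, $v_i(X_i)\le 86$ for $r=1$, and $v_i(X_i)=0$ for $r=0$.

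For the denominator, the union $X_i\cup X_j$ contains $r+s\ge 2r+2$ goods, so it splits into two parts of at least $r+1$ goods each. When $r+s$ is odd, one part can take $\lceil (r+s)/2\rceil$ goods. Each good is worth at least $60$ to agent $i$, so $\GoodsMu_i(2,X_i\cup X_j)\ge 60(r+1)$. Using agent $i$'s smallest values gives a sharper bound: agent~1's three smallest are $60+63+66=189$, agent~2's are $60+66+73=199$, and agent~3's are $60+61+62=183$.

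The cases then go as follows.
\begin{itemize}
\item For $r=2$, each part has at least $3$ goods, so the ratio is at most $166/183<226/227$.
\item For $r=1$, each part has at least $2$ goods, worth at least $120$, so the ratio is at most $86/120$.
\item For $r=0$, the union has at least $2$ goods, so $\GoodsMu_i>0$ and the ratio is $0$.
\end{itemize}
In all three cases the ratio is strictly below $226/227$.

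The only step needing care is $r=2$. The crude bound $172/180$ is still below $226/227\approx 0.9956$, so even that suffices. I therefore expect no real obstacle here; the main point is to argue cleanly that the pair $(i,j)$ with $s\ge r+2$ exists. This holds because if the minimum cardinality is at most $2$, the remaining at least $7$ goods are split among two agents, so one of them has at least $4\ge r+2$ goods.
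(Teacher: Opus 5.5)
Your proposal is correct and follows essentially the same route as the paper: take an agent with a bundle of at most $2$ goods and another holding at least two more, split their union so both sides have at least $r+1$ goods, and use the value range $[60,86]$ to bound the ratio by $172/180=43/45<226/227$. Your agent-specific refinements (e.g.\ $166/183$) and your explicit check that the benchmark is positive when $r=0$ are valid but not needed beyond what the paper's crude bound already gives.
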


\begin{proof}
Choose an agent $i$ with a smallest bundle and an agent $j$ with a largest bundle, and write
\[
  a=|X_i|,\qquad b=|X_j|.
\]
Since the three cardinalities sum to nine and are not all equal to three, we have $a\le 2$ and $b\ge a+2$. The $a+b$ goods in $X_i\cup X_j$ can be partitioned into bundles of cardinalities $a+1$ and $b-1$. Both parts contain at least $a+1$ goods, and therefore
\[
  \GoodsMu_i(2,X_i\cup X_j)\ge 60(a+1).
\]
At the same time, $v_i(X_i)\le 86a$. Hence
\[
  \frac{v_i(X_i)}{\GoodsMu_i(2,X_i\cup X_j)}
  \le \frac{86a}{60(a+1)}.
\]
For $a\in\{0,1,2\}$, the largest right-hand side occurs at $a=2$, giving
\[
  \frac{v_i(X_i)}{\GoodsMu_i(2,X_i\cup X_j)}
  \le \frac{172}{180}=\frac{43}{45}<\frac{226}{227}.
\]
\end{proof}

We now consider the remaining balanced allocations with $$|X_1| = |X_2| = |X_3| = 3.$$

There are exactly $$\binom93\binom63=84\cdot20=1680$$
different possible allocations.

The following observation reduces the computation of each pairwise
benchmark to only ten candidate partitions.

\begin{lemma}\label{lem:balanced}
    Fix a balanced allocation and an ordered pair $i \neq j$. The union $X_i \cup X_j$ contains six goods, and $\mu_i^G(2,X_i\cup X_j)$ is attained by a 3-versus-3 bipartition.
\end{lemma}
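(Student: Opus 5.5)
We compare every unbalanced bipartition of the six-good union $S:=X_i\cup X_j$ against one explicit balanced bipartition, namely $(X_i,X_j)$ itself. The key fact is that all singleton values in Table~\ref{tab:goods-values} lie in $[60,86]$.

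First we bound the balanced benchmark from below. The bipartition $(X_i,X_j)$ is $3$-versus-$3$, and each part consists of three goods of value at least $60$ to agent $i$. Hence some $3$-versus-$3$ bipartition has minimum part value at least $180$, so
\[
  \max_{\substack{(P_1,P_2)\in\PiTwo(S)\\ |P_1|=|P_2|=3}}\min\{v_i(P_1),v_i(P_2)\}\ \ge\ 180.
\]

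Next we bound every unbalanced bipartition from above. If $(P_1,P_2)\in\PiTwo(S)$ is not $3$-versus-$3$, then one part, say $P_1$, has at most two goods. Therefore
\[
\min\{v_i(P_1),v_i(P_2)\}\le v_i(P_1)\le 2\cdot 86=172<180.
\]
Every unbalanced bipartition is thus strictly worse than the best balanced one. Since $\PiTwo(S)$ is finite, the maximum defining $\GoodsMu_i(2,S)$ is attained, and it can only be attained by a $3$-versus-$3$ bipartition.

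Finally, there are $\binom63=20$ ordered $3$-versus-$3$ bipartitions of $S$. Swapping $P_1$ and $P_2$ does not change $\min\{v_i(P_1),v_i(P_2)\}$, so at most ten candidate partitions need to be evaluated, matching the remark preceding the lemma. No step here is a real obstacle. The one point to get right is to use the actual balanced partition $(X_i,X_j)$ as the witness for the lower bound of $180$, rather than an optimal partition.
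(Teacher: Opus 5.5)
Your proof is correct and is essentially the paper's argument: a $3$-versus-$3$ bipartition gives $\GoodsMu_i(2,X_i\cup X_j)\ge 180$, while any bipartition with a side of at most two goods has smaller side worth at most $172$. The only difference is that you name $(X_i,X_j)$ as the witness, which the paper leaves implicit; since any $3$-versus-$3$ split works equally well, this is not actually a delicate point.
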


\begin{proof}
    By definition, every three-good bundle has value at least $3 \cdot 60 = 180$ to agent $i$. Hence $$\mu_i^G(2, X_i \cup X_j) \geq 180.$$
    Any bipartition with a side of cardinality at most two has smaller-side value at most $2 \cdot 86 = 172$. Such a partition cannot attain a benchmark of at least $180$. Therefore, every maximizing bipartition must have three goods on each side. Consequently, only the $$\frac12 \binom63=10$$ unordered 3-versus-3 partitions can certify a violation at the target ratio.
\end{proof}

\begin{proposition}\label{prop:balanced}
Every balanced allocation has PMMS ratio at most $226/227$ and there exists an allocation that attains it exactly.
\end{proposition}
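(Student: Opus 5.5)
The plan is a finite exact computation organized so that it can be checked mechanically. By Lemma~\ref{lem:balanced}, for a balanced allocation and an ordered pair $i\neq j$, the benchmark $\GoodsMu_i(2,X_i\cup X_j)$ equals the maximum over the ten unordered 3-versus-3 bipartitions of the smaller side's value. Since $\GoodsMu_i(2,X_i\cup X_j)\ge 180>0$, every ordered pair enters the minimum defining $\alpha(X)$. For each of the $1680$ balanced allocations we therefore compute six pairwise ratios, each as an exact rational. Each ratio requires evaluating ten partitions. The resulting $\alpha(X)$ is the minimum of the six ratios.

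The first claim is that $\alpha(X)\le 226/227$ for every balanced allocation. For each allocation it suffices to exhibit one \emph{certificate}. A certificate is an ordered pair $(i,j)$ and a 3-versus-3 partition $(P_1,P_2)$ of $X_i\cup X_j$ such that
\[
227\,v_i(X_i)\le 226\min\{v_i(P_1),v_i(P_2)\}.
\]
This requires only integer arithmetic. I would reduce the enumeration by symmetry where possible. However, the agents' valuations differ, so the main reduction is simply to iterate over which three-subset agent~1 receives and which three-subset agent~2 receives. I would then record, for each allocation, the witnessing pair and partition, so the certificate list can be checked independently.

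The second claim is attainment. Here we exhibit one specific balanced allocation $X^\star$ and verify all six pairwise conditions exactly. We show the ratio is at least $226/227$ for every pair, with equality for some pair. For each pair this requires computing the true maximin over the ten partitions, not just producing a witness. Equality then needs a single witnessing partition with values $226$ and $227$ times a common factor. For instance, $v_i(X_i)=226k$ and $\GoodsMu_i=227k$; since the values lie in $[180,258]$, this is likely the case $v_i(X_i)=226$, $\GoodsMu_i=227$. Combined with Lemma~\ref{lem:unbalanced}, which shows unbalanced allocations fall strictly below $226/227$, this yields $\alpha^\star=226/227$ for the instance. That gives Theorem~\ref{thm:goods-ratio}.

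The main obstacle is that the first claim is not amenable to a clean structural argument. The values are ad hoc, and the bound is tight, so slack-based estimates like those in Lemma~\ref{lem:unbalanced} will not work. I would rely on exhaustive exact enumeration, referring to the verification scripts, and present in the text the optimal allocation $X^\star$ with its binding pair. Optionally I would add a summary of which agent pairs certify the bound across the $1680$ cases, so that a reader can audit the computation.
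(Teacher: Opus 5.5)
Your plan is essentially the paper's proof: an exhaustive exact check of the $1680$ balanced allocations, each certified by an ordered pair and a $3$-versus-$3$ split with $227\,v_i(X_i)\le 226\min\{v_i(P_1),v_i(P_2)\}$, followed by an explicit witness whose six directed benchmarks are computed exactly. The one piece you leave unspecified is the witness itself, which the paper takes to be $X_1=\{g_2,g_3,g_9\}$, $X_2=\{g_5,g_6,g_7\}$, $X_3=\{g_1,g_4,g_8\}$ with binding pair $2\to 3$; there $v_2(X_2)=226$ and the union, worth $454$ to agent~$2$, splits as $227+227$, so the half-total bound gives $\GoodsMu_2(2,X_2\cup X_3)=227$ exactly (and since all values are integers and $\gcd(226,227)=1$, the case $k=1$ you called likely is in fact forced).
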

\begin{proof}
We first show that no balanced allocation can have PMMS factor strictly
larger than $226/227$. We exhaustively check all $1680$ balanced
allocations. For each such allocation $X$, the verification finds an
ordered pair $i\ne j$ and a $3$-versus-$3$ partition
\[
  X_i\cup X_j=P_1\sqcup P_2
\]
such that
\[
  227\,v_i(X_i)
  \le
  226\,\min\{v_i(P_1),v_i(P_2)\}.
\]
Since
\[
  \GoodsMu_i(2,X_i\cup X_j)
  \ge
  \min\{v_i(P_1),v_i(P_2)\},
\]
it follows that
\[
  \frac{v_i(X_i)}
       {\GoodsMu_i(2,X_i\cup X_j)}
  \le
  \frac{226}{227}.
\]
Hence $\alpha(X)\le 226/227$ for every balanced allocation $X$.

It remains to show that this bound is attainable. Consider the allocation
\[
  X_1=\{g_2,g_3,g_9\},\qquad
  X_2=\{g_5,g_6,g_7\},\qquad
  X_3=\{g_1,g_4,g_8\}.
\]
The six directed pairwise benchmarks are shown in
Table~\ref{tab:goods-witness}.

\begin{table}[h]
  \centering
  \small
  \begin{tabular}{cccc}
    \toprule
    Ordered pair
      & $v_i(X_i)$
      & $\GoodsMu_i(2,X_i\cup X_j)$
      & Ratio \\
    \midrule
    $1\rightarrow 2$ & $219$ & $219$ & $1$ \\
    $1\rightarrow 3$ & $219$ & $217$ & $219/217$ \\
    $2\rightarrow 1$ & $226$ & $218$ & $226/218$ \\
    $2\rightarrow 3$ & $226$ & $227$ & $226/227$ \\
    $3\rightarrow 1$ & $221$ & $209$ & $221/209$ \\
    $3\rightarrow 2$ & $221$ & $217$ & $221/217$ \\
    \bottomrule
  \end{tabular}
  \caption{Directed pairwise benchmarks for a balanced allocation
  attaining the goods bound.}
  \label{tab:goods-witness}
\end{table}

For the critical ordered pair $2\rightarrow 3$,
\[
  X_2\cup X_3
  =
  \{g_1,g_4,g_5,g_6,g_7,g_8\}.
\]
Agent $2$ values this union at
\[
  86+76+78+75+73+66=454.
\]
The partition
\[
  \{g_1,g_6,g_8\}
  \sqcup
  \{g_4,g_5,g_7\}
\]
has values
\[
  86+75+66=227
  \qquad\text{and}\qquad
  76+78+73=227,
\]
respectively. Therefore
\[
  \GoodsMu_2(2,X_2\cup X_3)\ge 227.
\]
On the other hand, in any bipartition of a set of total value $454$,
the less valuable side has value at most $454/2=227$. Hence
\[
  \GoodsMu_2(2,X_2\cup X_3)=227.
\]
Since
\[
  v_2(X_2)=78+75+73=226,
\]
the corresponding directed ratio is exactly
\[
  \frac{v_2(X_2)}
       {\GoodsMu_2(2,X_2\cup X_3)}
  =
  \frac{226}{227}.
\]
Every other directed ratio in Table~\ref{tab:goods-witness} is at least
$1$. Consequently,
\[
  \alpha(X)=\frac{226}{227}.
\]
\end{proof}

Lemma~\ref{lem:unbalanced} and Proposition~\ref{prop:balanced} give the universal upper bound for this instance. This proves Theorem~\ref{thm:goods}.

\paragraph{Exact verification.}
The proof above uses Lemma~\ref{lem:unbalanced} to eliminate unbalanced
allocations analytically and therefore requires an exhaustive check only
of the $1680$ balanced allocations. As an independent audit, the exact
verifier
\path{pmms_226_227_bundle/check_226_227.py}
instead enumerates all
$3^9=19{,}683$ labeled allocations of the nine goods among the three agents, without
using the cardinality reduction.

For every allocation, the verifier computes all six directed two-way
maximin-share benchmarks exactly and hence evaluates $\alpha(X)$. The
full enumeration returns
\[
\begin{array}{lr}
\text{balanced allocations}                         & 1{,}680,\\
\text{unbalanced allocations}                       & 18{,}003,\\
\text{exact-PMMS allocations}                       & 0,\\
\text{maximum unbalanced factor}                    & 83/103,\\
\text{maximum balanced factor}                      & 226/227,\\
\text{allocations attaining the optimum}            & 3,\\
\text{balanced allocations without a certificate}  & 0.
\end{array}
\]
In particular, the unrestricted exhaustive computation independently
confirms that
\[
  \max_X\alpha(X)
  =
  \frac{226}{227}.
\]

All singleton values and two-way maximin-share benchmarks are integers,
and all comparisons of PMMS factors are performed by exact rational
cross-multiplication. Thus the verification uses no floating-point
comparisons.

\subsection{Impossibility for Additive Chores}
We consider throughout this section the following fair division instance with $n=5$ agents and $m=12$ chores. The chores have three types, denoted by $A,B,C$, with four identical copies of each type.
\begin{table}[H]
  \centering
  \small
  \begin{tabular}{c*{3}{r}}
    \toprule
      & $A$ & $B$ & $C$ \\
    \midrule
    $a_1,a_2$       & 140   & 100    & 21 \\
    $a_3,a_4,a_5$   & 20413 & 157283 & 100000 \\
    \bottomrule
  \end{tabular}
  \caption{Singleton costs in the five-agent, twelve-chore instance.}
  \label{tab:chores-values}
\end{table}
\begin{theorem}\label{thm:chores}
There is an instance with five agents and twelve chores in which every cost
function is strictly positive and additive, but no exact PMMS allocation
exists. In addition, the optimum PMMS ratio of the instance is exactly
\[
  R := \frac{220413}{200000}=1.102065.
\]
Consequently,
\[
  \rhoPMMS\ge R>1.
\]
\end{theorem}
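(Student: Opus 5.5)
We will prove the theorem in three steps, mirroring the goods case: first eliminate allocations by cardinality analytically, then certify the remaining ones by exact enumeration, and finally exhibit a witness allocation attaining $R$. The structure of the instance helps. There are only three chore types with four copies each, so an allocation is determined, up to relabeling identical chores, by a $5\times 3$ nonnegative integer matrix with column sums $4$. The two agents $a_1,a_2$ have identical costs, as do $a_3,a_4,a_5$, so the search space is small. The benchmark $\ChoreMu_i(2,S)$ of any union $S$ depends only on the type counts $(x_A,x_B,x_C)$ of $S$. It can therefore be computed exactly by minimizing $\max\{c_i(P),c_i(S)-c_i(P)\}$ over sub-count vectors $P\le S$, which is at most $5^3$ candidates.

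\textbf{Step 1: lower bound $\rho^\star\ge R$.} For every count matrix we find an ordered pair $i\ne j$ and a bipartition certificate. Each certificate is a witness that the ratio $c_i(X_i)/\ChoreMu_i(2,X_i\cup X_j)$ is at least $R$. Since $\ChoreMu_i$ is a minimum, this needs the exact optimum rather than just one partition. We therefore compute $\ChoreMu_i$ exactly from the count vector and compare $200000\,c_i(X_i)\ge 220413\,\ChoreMu_i(2,X_i\cup X_j)$ by integer cross-multiplication. To keep the argument readable, we first prune analytically, using Lemma~\ref{lem:basic-bound} and the fact that PMMS implies EFX. Any agent among $a_3,a_4,a_5$ holding a $B$ together with other chores, or holding two $C$'s while someone holds only an $A$, is quickly seen to violate the bound. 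We also use that the cheap types for $a_1,a_2$ ($C$ costs $21$) force structure. We expect the pruning to leave only a handful of count patterns, which we then tabulate, as in Table~\ref{tab:goods-witness}.

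\textbf{Step 2: upper bound $\rho^\star\le R$.} We exhibit one explicit allocation. The ratio $220413/200000$ strongly suggests an agent among $a_3,a_4,a_5$ holding $\{A,B\}$, with cost $20413+157283=177696$? That does not match, so more likely the critical agent holds $\{A\}\cup\{C,C\}$ or $\{A,C\}$ at cost $120413$ or $220413$, against a benchmark of $200000=2c(C)$. We will find the witness by the enumeration and then verify all twenty directed ratios by hand, showing each is at most $R$. Equality at the critical pair follows from the total-cost bound, as in the goods proof.

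The main obstacle is Step 1: making the exhaustive lower-bound certificate rigorous and presentable. The benchmark is a min--max, so a violation needs a proof that \emph{no} partition does better, not a single partition. We will handle this by a type-count dynamic program with exact integer arithmetic, reported as a verified script in the same style as the goods appendix. We back it with an analytic reduction that explains why few patterns survive.
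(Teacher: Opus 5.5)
Your backbone is the paper's: reduce to the $70^3=343{,}000$ labeled type-count allocations, certify each one by exact integer cross-multiplication, and exhibit a witness. You correctly guessed the critical agent, one of $a_3,a_4,a_5$ holding $A+2C$ (cost $220413$) against a benchmark of $200000$. The paper's witness is $X_1=X_2=(0,1,1)$, $X_3=(1,0,2)$, $X_4=(1,1,0)$, $X_5=(2,1,0)$, with critical pair $3\to 4$.

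However, you have the direction of the certificates reversed, and this makes your Step~2 argument fail as written. Because $\ChoreMu_i$ is a minimum, a single bipartition $(P_1,P_2)$ of $X_i\cup X_j$ with $220413\max\{c_i(P_1),c_i(P_2)\}\le 200000\,c_i(X_i)$ already proves that the ratio is at least $R$. So Step~1 needs no exact optimum; computing it is harmless but unnecessary. It is Step~2, showing that every directed ratio of the witness is \emph{at most} $R$, that needs lower bounds on $\ChoreMu_i$, i.e., a proof that no partition does better. There your shortcut breaks. For agent $3$ and $X_3\cup X_4=(2,1,2)$ the total cost is $398109$. The averaging bound of Lemma~\ref{lem:basic-bound} therefore gives only $\ChoreMu_3\ge 199054.5$, and hence only $220413/199054.5\approx 1.1073>R$. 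The equality $\ChoreMu_3(2,X_3\cup X_4)=200000$ holds because no sub-bundle of $(2,1,2)$ has cost strictly between $198109$ and $200000$; this is the finite check the paper performs. The goods analogy misleads you, since there the total $454$ was exactly $2\cdot 227$.

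Your proposed pruning rule is also false. In the witness above, $a_4$ holds $\{A,B\}$ and $a_5$ holds $\{A,A,B\}$, yet both satisfy exact PMMS. All of $a_4$'s benchmarks exceed $177696$, and $\ChoreMu_5(2,(3,2,0))=198109=c_5(X_5)$. So an agent among $a_3,a_4,a_5$ holding $B$ together with other chores does not by itself give a ratio of at least $R$. Allocations you discard by this rule would still need a certificate from some other pair, so the rule cannot be justified as you describe. Likewise, no cardinality elimination in the style of Lemma~\ref{lem:unbalanced} is available: the singleton costs are nowhere near a narrow interval. Drop the pruning and run the unpruned enumeration over all $343{,}000$ type-count allocations, as the paper does. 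Then compute the twenty exact benchmarks of the witness from count vectors to complete Step~2.
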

The proof has two parts. First we exhibit an allocation whose factor is exactly $R$. Second, an exact exhaustive verifier proves that every allocation has factor at least $R$.

\paragraph{Type-count reduction.} Since all chores of the same type have identical costs to every agent, permuting copies within a chore type does not change any bundle cost or pairwise maximin-share benchmark. Hence an allocation can be represented by the type-count vector of each agent's bundle.

For each chore type, distributing four identical copies among five labeled
agents is equivalent to choosing a weak composition of $4$ into $5$ parts,
of which there are
\[
  \binom84=70.
\]
Since the three chore types can be distributed independently, exact
enumeration reduces to
\[
  70^3=343{,}000
\]
type-count allocations. The agents remain labeled throughout; only
permutations among chores of the same type are removed.

For a type-count vector $(a,b,c)$, we write
\[
  c_i(a,b,c)
  :=
  a\,c_i(A)+b\,c_i(B)+c\,c_i(C).
\]

\begin{proposition}\label{prop:chore-attain}
    The chore instance has an allocation with PMMS factor exactly $R$.
\end{proposition}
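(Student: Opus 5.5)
The plan is to exhibit one explicit allocation and verify all twenty directed ratios by hand, using the type-count notation $c_i(a,b,c)$. The target value $R=\frac{220413}{200000}=\frac{c_i(A)+2c_i(C)}{2c_i(C)}$ for $i\in\{a_3,a_4,a_5\}$ suggests the intended shape. Some agent among $a_3,a_4,a_5$ should hold the bundle $(1,0,2)$, which costs $220413$ to her. Her worst pairwise union should then have two-way benchmark exactly $200000$, attained by isolating the two copies of $C$.

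I would first try the following candidate (type vectors $(a,b,c)$):
\[
X_{a_1}=(0,2,0),\quad X_{a_2}=(0,2,0),\quad X_{a_3}=(1,0,2),\quad X_{a_4}=(1,0,1),\quad X_{a_5}=(2,0,1).
\]
This uses exactly four copies of each type. The key computation is for agent $a_3$ against $a_4$ and against $a_5$.

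\begin{itemize}
\item Against $a_4$, the union is $(2,0,3)$. The bipartition $(0,0,2)\sqcup(2,0,1)$ has maximum side $\max\{200000,140826\}=200000$.
\item Against $a_5$, the union is $(3,0,3)$. The bipartition $(0,0,2)\sqcup(3,0,1)$ has maximum side $\max\{200000,161239\}=200000$.
\item Any bipartition of either union puts at least two copies of $C$ on one side, so its maximum side is at least $200000$.
\end{itemize}

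Hence both benchmarks equal $200000$, giving ratio exactly $R$.

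Next I would bound every other directed ratio by at most $R$, and indeed mostly by at most $1$.

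\begin{itemize}
\item For $a_3$ against $a_1$ or $a_2$, the union $(1,2,2)$ has total $534979$. Lemma~\ref{lem:basic-bound} gives a benchmark above $267000>220413$.
\item For $a_4$ and $a_5$ against $a_1,a_2$, the union contains a $B$. Lemma~\ref{lem:basic-bound} gives a benchmark at least $157283$, which exceeds $c_{a_4}(1,0,1)=120413$ and $c_{a_5}(2,0,1)=140826$.
\item For $a_4$ and $a_5$ against the other agents among $\{a_3,a_4,a_5\}$, the half-total bound suffices. For example, $a_5$ against $a_4$ has union $(3,0,2)$ with total $261239$, whose half is at least $c_{a_5}(1,0,1)$. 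For $a_5$ itself I would check directly that the relevant unions force a side of cost at least $140826$.
\item For agents $a_1,a_2$ (costs $140,100,21$), own cost is $200$. Against each other the union $(0,4,0)$ has benchmark $200$. Against $(1,0,2)$, $(1,0,1)$ and $(2,0,1)$, a short enumeration gives benchmarks $200$, $200$ and $240$ respectively. In the first two cases the best split is $\{B,B\}$ versus the rest; in the third, a side containing $B$ and $A$ is unavoidable.
\end{itemize}

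All of these are at most $1$, so the maximum directed ratio is exactly $R$, attained by $a_3$.

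The main obstacle is not conceptual but bookkeeping. Each benchmark requires a correct minimum over bipartitions, and a lower bound is needed for every pair where the ratio might exceed $1$. For those pairs I would rely on Lemma~\ref{lem:basic-bound}: a single $B$, two copies of $C$, or half the total. Equality cases need explicit partitions.

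If the candidate allocation unexpectedly fails some check, I would fall back on the exact verifier over the $70^3$ type-count allocations. I would extract from it an allocation attaining $R$ and then certify its twenty directed ratios in a table, as was done for the goods witness.
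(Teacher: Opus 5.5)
Your argument is correct. It uses the same template as the paper (exhibit one allocation, check all twenty directed ratios), but with a different witness, and that witness gives a noticeably lighter certificate.

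The paper uses $X_1=X_2=(0,1,1)$, $X_3=(1,0,2)$, $X_4=(1,1,0)$, $X_5=(2,1,0)$. There the lower bound $\ChoreMu_3(2,(2,1,2))\geq 200000$ needs a check that no subbundle of $(2,1,2)$ costs strictly between $198109$ and $200000$. Also, the runner-up ratio $220413/218522$ (agent $3$ against agent $5$) exceeds $1$, so the other pairs require genuine computation.

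By giving all four copies of $B$ to $a_1,a_2$, your allocation gains three simplifications:
\begin{itemize}
\item The critical lower bound becomes a pigeonhole fact, since each of the unions $(2,0,3)$ and $(3,0,3)$ contains three copies of $C$.
\item The single-item bound handles every pair of $a_4,a_5$ against $a_1,a_2$.
\item Every non-critical ratio is at most $1$.
\end{itemize}
The only cost is that $R$ is attained at two ordered pairs rather than one, which is irrelevant for the statement.

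Two slips need fixing.

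First, the half-total bound does not suffice for $a_5$ against $a_4$. The union $(3,0,2)$ has half-total $130619.5<140826=c_{a_5}(2,0,1)$; the inequality you wrote with $c(1,0,1)$ is actually the pair $a_4$ against $a_5$. The direct check you allude to does work. A bipartition of $(3,0,2)$ either puts both copies of $C$ on one side, costing at least $200000$, or splits them. In the latter case, the side with at least two copies of $A$ costs at least $140826$. So the benchmark equals $140826$ and this ratio is exactly $1$.

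Second, the benchmark for $a_1$ (or $a_2$) against $a_5$ is $261$, attained by $(1,1,0)$ versus $(1,1,1)$, not $240$. Half the total $501$ already exceeds $240$. Your claim that a side containing both $A$ and $B$ is unavoidable is also false: take $(2,0,0)$ versus $(0,2,1)$. The conclusion is unaffected because $200\le 250.5$.
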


\begin{proof}
    Consider the type-count allocation
    \[
    X_1 = (0,1,1), \qquad X_2 = (0,1,1), \qquad X_3 = (1,0,2), \qquad X_4 = (1,1,0), \qquad X_5 = (2,1,0).
    \]
    The coordinates sum to $(4,4,4)$, so the allocation is feasible.

    For the ordered pair $3\to4$, we have

$$
  c_3(X_3)=20413+2(100000)=220413,
$$

and

$$
  X_3\cup X_4=(2,1,2).
$$

The bipartition

$$
  2C
  \qquad\text{versus}\qquad
  2A+B
$$

has costs

$$
  200000
  \qquad\text{and}\qquad
  2(20413)+157283=198109,
$$

so

$$
  \ChoreMu_3(2,X_3\cup X_4)\le 200000.
$$

The total cost of $(2,1,2)$ is $398109$. Hence, if a bipartition had maximum side-cost strictly below $200000$, one side would have cost strictly between $198109$ and $200000$. A direct check of the subbundles of $(2,1,2)$ shows that no such cost occurs. Therefore

$$
  \ChoreMu_3(2,X_3\cup X_4)=200000,
$$

and thus

$$
  \frac{c_3(X_3)}
       {\ChoreMu_3(2,X_3\cup X_4)}
  =
  \frac{220413}{200000}
  =
  R.
$$

For the remaining nineteen ordered pairs, direct evaluation gives

$$
  \max_{(i,j)\ne(3,4)}
  \frac{c_i(X_i)}
       {\ChoreMu_i(2,X_i\cup X_j)}
  =
  \frac{220413}{218522}
  <
  R.
$$

Hence

$$
  \rho(X)=R.
$$
\end{proof}

\begin{proposition}\label{prop:chore-lower}
    Every allocation of the chore instance has PMMS factor at least $R$.
\end{proposition}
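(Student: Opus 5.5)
The plan is to prove Proposition~\ref{prop:chore-lower} by an exact exhaustive certificate over the $70^3=343{,}000$ type-count allocations from the type-count reduction. For each allocation $X$ we exhibit an ordered pair $i\neq j$ and a bipartition $(P_1,P_2)$ of $X_i\cup X_j$ with
\[
  200000\,c_i(X_i)\ \geq\ 220413\,\max\{c_i(P_1),c_i(P_2)\}.
\]
Since $\ChoreMu_i(2,X_i\cup X_j)\leq\max\{c_i(P_1),c_i(P_2)\}$, this gives $\rho(X)\geq R$ whenever the benchmark is positive. All costs are positive integers, so every benchmark of a nonempty union is positive. Every comparison is an integer cross-multiplication, so no floating point is needed. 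Because identical agents $a_1,a_2$ and $a_3,a_4,a_5$ share cost functions, the verifier could further quotient by permutations within each agent class. However, we would keep labels and enumerate all $343{,}000$ vectors, for auditability.

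To make the certificates cheap, note that for a union with type counts $(x,y,z)$, a bipartition is determined by a subvector $(a,b,c)\leq(x,y,z)$. There are at most $9^3=729$ such subvectors, since each count is at most $8$. Hence $\ChoreMu_i$ can be computed exactly by scanning them. We would precompute, for each of the two cost functions and each vector $(x,y,z)$ with entries at most $8$, the exact min--max value. This table has at most $729$ entries per cost class. After that, each allocation needs only $20$ table lookups and $20$ cross-multiplications.

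Before enumerating, some analytic pruning shortens the search and serves as a sanity check, much like Lemma~\ref{lem:unbalanced} did for goods. An agent of the second class has $C$ ($100000$) and $B$ ($157283$) much larger than $A$ ($20413$). So if such an agent holds two heavy chores while some other agent holds no heavy chore, simple balancing arguments already give a ratio above $R$. Similarly, the first-class agents value $A$ highest. These observations are optional; the certificate search is what actually proves the claim.

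The main obstacle is correctness and trust rather than computation: we must be sure that the type-count enumeration covers every allocation, which holds by the reduction, and that the table really gives the exact min--max. Separately, the witness allocation of Proposition~\ref{prop:chore-attain} shows that the bound $R$ is tight. A certificate with equality at exactly $R$ is expected only at near-optimal allocations. For those we would report the minimum of $\rho$ over all allocations and confirm that it equals $R$ exactly, together with the number of allocations attaining it.
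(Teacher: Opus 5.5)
Your proposal is correct and takes the same approach as the paper. Both reduce to the $70^3=343{,}000$ labeled type-count allocations and certify, for each one, an ordered pair $i\neq j$ with positive benchmark satisfying $200000\,c_i(X_i)\geq 220413\,\ChoreMu_i(2,X_i\cup X_j)$ using exact integer cross-multiplication. Certifying with an explicit bipartition as an upper bound on $\ChoreMu_i$ is the same check, since your exact min--max table supplies the optimal bipartition; note only the harmless overcount that a pairwise union holds at most four copies of each type, so there are at most $5^3=125$ subvectors, not $9^3$.
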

\begin{proof}
    By the type-count reduction above, it suffices to consider the \[70^3 = 343{,}000\] labeled type-count allocations.

    Fix such an allocation $X$. For every ordered pair $i \neq j$, let \[
    U = X_i \cup X_j
    \] denote the type-count vector of the pairwise union. The checker computes \[
    \mu_i^C(2,U) = \min_{P \subseteq U} \max \{c_i(P),c_i(U \setminus P)\}.
    \]
    For every allocation $X$, the checker verifies that there exists an ordered pair $i \neq j$ with positive benchmark such that \[
    \frac{c_i(X_i)}{\mu_i^C(2,X_i \cup X_j)} \geq \frac{220413}{200000} = R.
    \] 
    Equivalently, it verifies the integer inequality \[200000 c_i(X_i) \geq 220413 \mu_i^C(2, X_i \cup X_j).\]
    Hence, every allocation satisfies \[
    \rho(X) = \max_{i \neq j} \frac{c_i(X_i)}{\mu_i^C(2,X_i \cup X_j)} \geq R.
    \]
\end{proof}

\paragraph{Exact enumeration.} The exhaustive search over all $343{,}000$ labeled type-count allocations returns
$$
\begin{array}{lr}
\text{type-count allocations checked}        & 343{,}000,\\
\text{minimum PMMS factor}                    & 220413/200000,\\
\text{allocations attaining the minimum}      & 36.
\end{array}
$$
Together with Proposition~\ref{prop:chore-attain} and Proposition~\ref{prop:chore-lower}, this establishes \[
\min_X \rho(X) = \frac{220413}{200000} = 1.102065.
\]
The verification is entirely exact. All singleton costs, bundle costs, and two-way maximin-share benchmarks are integers. Comparisons of PMMS ratios are performed by integer cross multiplication; in particular, testing whether a directed ratio is at least $R$ amounts to checking \[
200000 c_i(X_i) \geq 220413 \mu_i^C(2, X_i \cup X_j).
\]
Consequently, the exhaustive certificate uses no floating-point arithmetic or numerical tolerance.

\end{document}